\newtheorem{thm}{Theorem}
\newtheorem{lem}[thm]{Lemma}
\newcommand{\FF}{{\mathbb{F}}}
\newcommand{\ev}{{\text{ev}}}
\newcommand{\sub}{{\text{SubfieldSubcode}}_2}
\newcommand{\modulo}{{\text{~mod~}}}
\newcommand{\colvec}[2][.8]{%
  \scalebox{#1}{%
    \renewcommand{\arraystretch}{.8}%
    $\begin{bmatrix}#2\end{bmatrix}$%
  }
}
\begin{document}
\title{%Fulcrum Network Codes: High Performance, Low Complexity,
  %Heterogeneity-Aware Network Coding
Fulcrum Network Codes: A Code for Fluid Allocation of Complexity \vspace{-0.2cm}
}

%% \author{
%% \authorblockN{Daniel E. Lucani$^{1}$, Morten V. Pedersen$^{1}$, Diego Ruano$^{2}$, Chres W. S{\o}rensen$^{1}$ \\
%%   Frank H. P. Fitzek$^{3}$, Janus Heide$^{4}$, Olav Geil$^{2}$}\\
%% %{....Departments of Electronic Systems, Aalborg University, Denmark\\
%% %del@es.aau.dk}
%% \authorblockA{$^{1}$Department of Electronic Systems, Aalborg University, Denmark}\\
%% \authorblockA{$^{2}$Department of Mathematical Sciences, Aalborg University, Denmark}\\
%% \authorblockA{$^{3}$Chair of Communication Networks, Technische Universitaet Dresden, Germany}\\
%% \authorblockA{$^{4}$Steinwurf ApS, Denmark}\\
%% \authorblockA{Email:\{ del, mvp, cws\}@es.aau.dk, \{ diego, olav\}@math.aau.dk, frank.fitzek@tu-dresden.de, janus@steinwurf.com}
%% \vspace{-1.1cm}
%% }

\author{Daniel E. Lucani, Morten V. Pedersen, Diego Ruano, Chres
  W. S{\o}rensen, Frank H. P. Fitzek, Janus Heide, Olav
  Geil \vspace{-2.1cm} \thanks{D. E. Lucani, M. V. Pedersen, and
    C. W.S{\o}rensen are with Department of Electronic Systems,
    Aalborg University, Denmark, e-mail: \{del, mvp,
    cws\}@es.aau.dk. D. Ruano and O. Geil are with the Department of
    Mathematical Sciences, Aalborg University, Denmark, e-mail: \{
    diego, olav\}@math.aau.dk. F. H. P. Fitzek is with the Chair of
    Communication Networks, Technische Universitaet Dresden, Germany,
    e-mail: frank.fitzek@tu-dresden.de. J. Heide is with Steinwurf
    ApS, Denmark, e-mail: janus@steinwurf.com. This work was
    partially financed by the Green Mobile Cloud project (Grant
    No. DFF - 0602-01372B), the Colorcast project (Grant No. DFF -
    0602-02661B), the TuneSCode project (Grant No. DFF - 1335-00125),
    and project Grant No. DFF-4002-00367 granted by the Danish Council
    for Independent Research. The work was also partially supported by
    the Spanish MINECO project (Grant No. MTM2012-36917-C03-03).}  }

%\markboth{IEEE Journal of Selected Areas in Communications}%
%{Submitted paper}
\maketitle
\begin{abstract}

  This paper proposes Fulcrum network codes, a network coding
  framework that achieves three seemingly conflicting objectives: (i)
  to reduce the coding coefficient overhead to almost n bits per
  packet in a generation of n packets; (ii) to operate the network
  using only $GF(2)$ operations at intermediate nodes if necessary,
  dramatically reducing complexity in the network; (iii) to deliver an
  end-to-end performance that is close to that of a high-field network
  coding system for high-end receivers while simultaneously catering
  to low-end receivers that decode in $GF(2)$. As a consequence of
  (ii) and (iii), Fulcrum codes have a unique trait missing so far in
  the network coding literature: they provide the network with the
  flexibility to spread computational complexity over different
  devices depending on their current load, network conditions, or even
  energy targets in a decentralized way. At the core of our framework
  lies the idea of precoding at the sources using an expansion field
  $GF(2^h)$ to increase the number of dimensions seen by the network
  using a linear mapping. Fulcrum codes can use any high-field linear
  code for precoding, e.g., Reed-Solomon, with the structure of the
  precode determining some of the key features of the resulting
  code. For example, a systematic structure provides the ability to
  manage heterogeneous receivers while using the same data stream. Our
  analysis shows that the number of additional dimensions created during precoding controls the trade-off
  between delay, overhead, and complexity. Our
  implementation and measurements show that Fulcrum achieves similar
  decoding probability as high field \ac{RLNC} approaches but with
  encoders/decoders that are an order of magnitude faster.

\end{abstract}
\IEEEpeerreviewmaketitle

\vspace{-0.4cm}
\section{Introduction}
\vspace{-0.2cm}

Ahlswede et al~\cite{ahlswede} proposed network coding (NC) as a means
to achieve network capacity of multicast sessions as determined by the
min-cut max-flow theorem~\cite{Elias56}, a feat that was provably
unattainable using standard store-and-forwarding of packets
(routing). NC breaks with this paradigm, encouraging intermediate
nodes in the network to mix (recode) data packets. Thus, network
coding proposed a store-code-forward paradigm to network operation,
essentially extending the set of functions assigned to intermediate
nodes to include coding operations. Linear network codes were shown to
be sufficient to achieve multicast capacity~\cite{Li03}. \ac{RLNC}
provides an asymptotically optimal and distributed approach to create
linear combinations using random coefficients at intermediate
nodes~\cite{bib_rand_nc}.

In fact, network coding has shown significant gains in a multitude of
settings, from wireless networks~\cite{Zhao10, catwoman, Nistor11},
and multimedia transmission~\cite{Seferoglu09}, to distributed
storage~\cite{Dimakis11}, and Peer-to-Peer (P2P)
networks~\cite{Gkantsidis06}. Practical implementations have also
confirmed NC's gains and
capabilities~\cite{xoreInAir,Pedersen09,Vingelmann11}. The reason
behind these gains lies in two facts. First, the network need not
transport each packet without modification through the network, which
opens more opportunities and freedom to deliver the data to the
receivers and increases the impact of each transmitted coded packet (a
linear combination of the original packets). Second, receivers no
longer need to track individual packets, but instead accumulate enough
independent linear combinations in order to recover the original
packets (decode). These relaxations have a profound impact on system
designs and achievable gains.

\begin{figure}[t]
\centering
\includegraphics[width=0.5\textwidth]{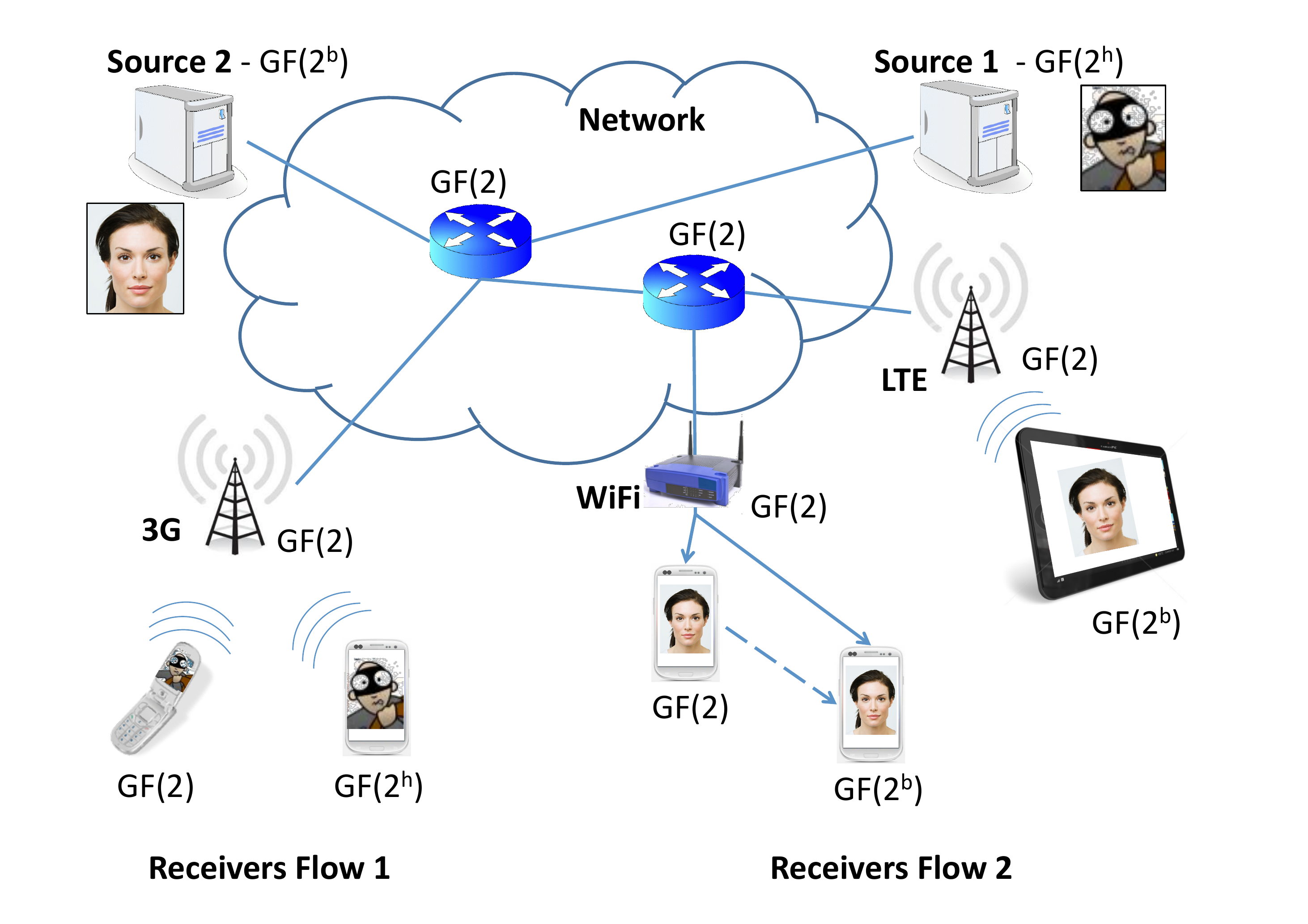}
\caption{ Fulcrum network codes allow sources and receivers to operate
  at higher field sizes to achieve high performance but maintaining
  compatibility with the $GF(2)$--only network. Receivers can choose
  to trade--off delay with decoding effort by choosing to decode with
  $GF(2)$ or in higher fields. }
\label{fig:General}
\vspace{-0.5cm}
\end{figure}

After more than a decade of research and in spite of NC's theoretical
gains in throughput, delay, and energy performance, its wide spread
assimilation remains elusive. One, if not the most, critical weakness
of the technology is the inherent complexity that it introduces into
network devices. This complexity is driven by two factors. First,
devices must perform additional processing, which may limit the energy
efficiency gains or even become a bottleneck in the system's overall
throughput if processing is slower than the incoming/outgoing data
rates. This additional effort can be particularly onerous if we
consider that the conventional wisdom dictates that large field sizes
\textit{are needed} to provide high reliability, throughput, and delay
performance. In addition to the computational burden, the use of high
field sizes comes at the cost of a higher signaling overhead to
communicate the coefficients used for coding the data packets. Other
alternatives, e.g., sending a seed for a pseudo-random number
generator, are relevant end-to-end but do not allow for a simple
recoding mechanism.  Interestingly,~\cite{Heide11} showed that using
moderate field sizes, specially $GF(2)$~\footnote{We shall use $GF(q)$
  and $\FF_q$ to identify finite fields of size $q$.}, is key to
achieving a reasonable trade-off among computational complexity,
throughput performance, and total overhead specially when recoding
data packets. This is particularly encouraging since $GF(2)$
performing encoding/decoding could be as fast as $160$~Mbps and
$9600$~Mbps in a 2009 mobile phone and laptop~\cite{Heide09}, respectively, while in
2013 the speeds increased by five-fold in high-end
phones~\cite{LeanMean}. Even limited sensors, e.g., TelosB motes, can
generate packets in $GF(2)$ at up to $500$~kbps~\cite{Nistor12}.

Second, devices must support different configurations for each
application or data flow, e.g., different field sizes, to achieve a
target performance. Supporting disparate configurations translates
into high costs in hardware, firmware, or software. In computationally
constrained devices, e.g., sensors, the support for encoding,
recoding, or decoding in higher fields is prohibitive due to the
processing effort required. On the other end of the spectrum,
computationally powerful devices may also be unable to support
multiple configurations. For example, high-load, high-speed Internet
routers would require deep packet inspection to determine the coding
configuration, followed by a different treatment of each incoming
packet. This translates into additional expensive hardware to provide
high-processing speeds.  Additionally, intermediate nodes in the
network are heterogeneous in nature, which limits the system's viable
configurations.

A separate, yet related practical issue is the fact that receivers
interested in the same data flow may have wildly different
computational, display, and battery capabilities as well as different
network conditions. This end-device heterogeneity may restrict service
quality at high-end devices when support is required for low-end
devices, may deny service to low end devices for the benefit of high
end ones, or require the system to invest additional resources
supporting parallel data flows, each with characteristics matching
different sets of users.

A clear option to solve the compatibility and complexity challenges is
to limit sources, intermediate nodes, and receivers to use only
$GF(2)$. However, using only $GF(2)$ may deprive higher-end devices of
achieving higher reliability and throughput performance. Is it
possible to provide a single, easily implementable, and compatible
network infrastructure that supports flows with different end-to-end
requirements?

This paper shows that the solution, called Fulcrum network codes, is
simple, tunable, and surprisingly powerful. This is a framework that
hinges on using only $GF(2)$ operations in the network
(Fig.~\ref{fig:General}), to achieve reduced overhead, computational
cost, and compatibility to heterogeneous devices and data flows in the
network, while providing the opportunity of employing higher fields
end-to-end via a tunable and straightforward precoding mechanism for
higher performance. Fig.~\ref{fig:General} shows an example, where two
sources operate using different fields $GF(2^h)$ and $GF(2^b)$ for
source 1 and 2, respectively. The intermediate nodes in the network
use only $GF(2)$ operations. With Fulcrum network codes, the left-most
receiver of flow 1 in Fig.~\ref{fig:General} can choose to decode
using $GF(2)$ only as it has limited computation capabilities. Since
the left-most receiver of flow 2 has a better channel than other
devices and the router may have to broadcast for a longer time due to
the other receivers, the left-most receiver can choose to save energy
on computation by accumulating additional packets and decoding using
$GF(2)$. Furthermore, this receiver can also recode packets and
send them to a neighbor interested in the same content, thus
increasing the coverage of the system and reducing the number of
transmissions needed to deliver it.

\begin{figure*}[t]
\centering
\includegraphics[width=0.7\textwidth]{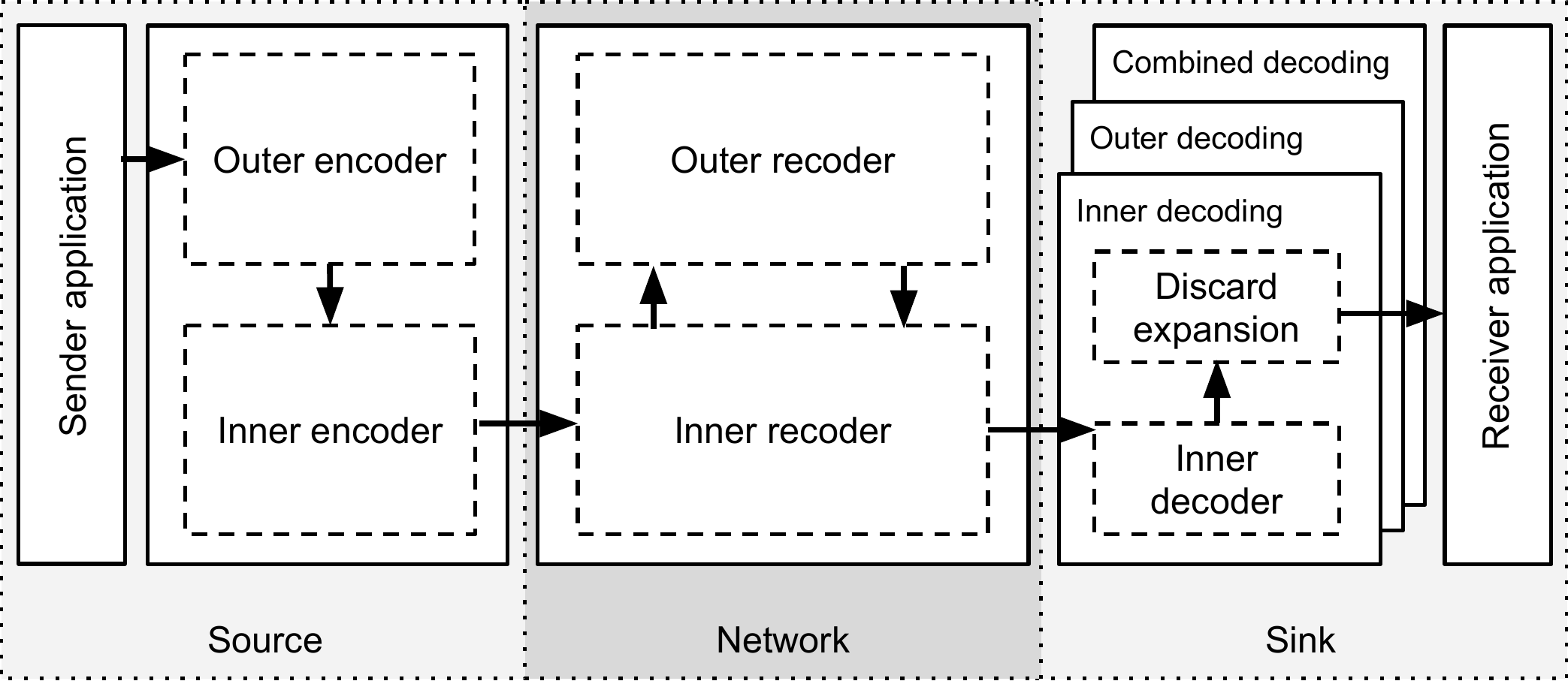}
\caption{ Description of system showing the inner and outer code
  structures. The outer code is typically established end-to-end. Although some
  applications could use outer recoders at intermediate nodes for higher efficiency
  in the network, in most scenarios the inner recoder is enough for
  supporting the desired functionalities. The sinks can choose from
  three main types of decoders: the inner, the outer, and the combined
  decoders. The outer can be exploited with any configuration of
  outer/inner codes, while the inner and combined decoders require a
  specific structure of the outer code, e.g., systematic.}
\label{fig:System}
\vspace{-0.3cm}
\end{figure*}

\vspace{-0.2cm}
\section{Description of the Scheme}\label{sec:description}
The key goals of Fulcrum network codes are the following:
\begin{enumerate}
\item Reduce the overall overhead of network coding by
  (a) reducing the overhead due to coding coefficients per packet, and
  (b) reducing the overhead due to transmission of linearly dependent
  packets.
\item Provide simple operations at the routers/devices in the
  network. The key is to make recoding at these devices as simple as
  possible, without compromising network coding capabilities.
\item Enable a simple and adaptive trade-off between performance and
  complexity.
\item Support compatibility with any end-to-end linear erasure code in
  $GF(2^h)$.
\item Control and choose desired performance and effort in end
  devices, while intermediate nodes provide a simple, compatible layer
  for a variety of applications.
\end{enumerate}

\subsection{Idea}
The key technical idea of Fulcrum is the use of a dimension expansion
step, which consists of taking a batch of $n$ packets, typically
called a generation, from the original file or stream and expand into
$n + r$ coded packets, where $r$ coded packets contain redundant
information and are called \emph{expansion packets}. After the
expansion, each resulting coded packet is treated as a new packet that
will be coded in $GF(2)$ and sent through the network.  The mapping
for this conversion is known at the senders and at the receivers
\textit{a priori}.

Since additions in any field of the type $GF(2^k)$ is simply a
bit-wise XOR, the underlying linear mapping in higher fields can be
reverted at the receivers. The reason to do the expansion is related
to the performance of $GF(2)$, which can introduce non-negligible
overhead in some settings~\cite{Nistor11,Lucani09FieldSize}. More
specifically, coded packets have a higher probability of being
linearly dependent when more data is available at the
receiver. Increasing dimensions addresses this problem by mapping back
to the high field representation after receiving $n$ linearly
independent coded packets and decoding before the probability of
receiving independent combinations in $GF(2)$ becomes prohibitive.
The number of additional dimensions, $r$, controls the decoding
probability performance. The larger the $r$, the better the
performance achieved by the receivers while still using $GF(2)$ in the
network.

Our approach naturally divides the problem in the design of inner and
outer codes, using the nomenclature of concatenated
codes~\cite{Forney66}. Concatenating codes is a common strategy in
coding theory, but typically used solely for increasing throughput
performance point-to-point~\cite{Forney66} or end-to-end, e.g., Raptor
codes~\cite{Shokrollahi06}. Some recent work on NC has considered the
idea of using concatenation to (i) create overlapping generations to
make the system more robust to time-dependent losses, but using the
same field size in the inner and outer code~\cite{Thibault08}; (ii)
decompose the network in smaller sub-networks in order to simplify
cooperative relaying~\cite{Kim07Concatenated}; (iii) connecting NC and
error correcting channel coding, e.g.,~\cite{Yin10}; or (iv) subspace
codes for noncoherent network coding~\cite{Skachek13}. Fulcrum is
fundamentally disruptive in two important ways. First, we allow the
outer code to be agreed upon by the sources and receivers (dimension
expansion), while the inner code is created in the network by recoding
packets. Thus, we provide a flexible code structure with controllable
throughput performance. Second, it provides a conversion from higher
field arithmetic to $GF(2)$ to reduce complexity.

Dividing into two separate codes has an added advantage, not
envisioned in previous approaches. This advantage comes from the fact
that the senders can control the outer code structure to accommodate
heterogeneous receivers. The simplest way to achieve this is by using
a systematic structure in the outer code. This provides the receivers
with the alternative to decode in $GF(2)$ after receiving $n + r$
coded packets instead of mapping back to higher fields after receiving
$n$ coded packets. This translates into less decoding complexity, as
$GF(2)$ requires simple operations, but incurring higher delay. The
latter comes from the fact that $r$ additional packets must be
received.

If the precoding uses a systematic structure, the system can support
three main types of receivers (See Fig.~\ref{fig:System}). First, a
computationally powerful receiver that decodes in $GF(2^h)$ by mapping
back from the $GF(2)$ combinations received. We call this the outer
decoder. This procedure is simple because the addition for any
extension field $GF(2^l)$ is the same as that to $GF(2)$, namely, a
bit-by-bit XOR.  We show that accumulating $n$ linearly independent
$GF(2)$ coded packets is enough to decode in the higher field.  A
receiver that decodes in $GF(2)$ reduces its decoding complexity but
needs to gather $n+r$ independent linear combinations. Finally, we
show that a hybrid decoder is possible, which can maintain the high
decoding probability when receiving $n$ coded packets as in the
high-field decoder, while having similar decoding complexity to that
of the inner decoder. We call this hybrid decoder the combined
decoder.

Our work is inspired in part by~\cite{Thomos12}, which attempted to
maintain overhead limited to a single symbol per packet. Thomos et al.
made a very careful design in their packet coding at the source, but
the end result is seemingly disappointing because only a small number
of packets could be transmitted maintaining the overhead at one
symbol. However, we argue that their careful code construction is not
really needed. In fact, the reason behind their results is dominated
by the network operations and their strict overhead limitation and not
the source code structure, as we will show in this paper. Through our
simple design framework, we break free from the constraint of a single
symbol overhead and discover the potential to (i) reduce the overhead
per packet in the network to roughly that of an end to end $GF(2)$
RLNC system (which is equivalent to the overhead reported
in~\cite{Thomos12}), (ii) trade-off performance in the presence of
heterogeneous receivers exploiting a family of precoders and simple
designs, and (iii) exploit any generation size without introducing a
synthetic constraint due to the field size at the precoder. The work
in~\cite{Thomos12} is a special subcase of our general framework.

\subsection{Design}
The overall framework is described in
Fig.~\ref{fig:System}, showing the actions of the source, the network,
and the destinations. In the following, we describe these actions in
more detail.

\subsubsection{Operations at the Source} using the $n$ original
packets, $P_1, P_2, ..., P_n$, the source generates $n + r$ coded
packets, $C_1, C_2, ..., C_{n+r}$ using $GF(2^h)$ operations (See
Fig.~\ref{fig:System}-Source). The additional $r$ coded packets are
called \emph{expansion packets}.  After generating these coded
packets, the source re-labels these as mapped packets to be sent
through the network and assigns them binary coefficients in
preparation for the $GF(2)$ operations to be carried in the network.
Finally, the source can code these new, re-labeled packets using
$GF(2)$. The $i$-th coded packet has the form $\sum_{j=1}^{n+r}
\lambda_{i,j} C_j$. The coding over $GF(2)$ is performed in accordance
with the network's supported inner code. For example, if the network
supports $GF(2)$ RLNC, the source generates RLNC coded
packets. However, other inner codes, e.g., perpetual~\cite{VTC2014},
tunable sparse network coding~\cite{Feizi12, NCSmartGrid}, are also
supported.

Our main design constraint is that the receiver should (i) decode with
the $n+r$ coded packets, and more importantly (ii) that it can decode
with high probability after the reception of $n$ coded packets. Given
that the structure of the initial mapping is controlled by the source,
we could use Reed-Solomon (RS) codes, which are known end-to-end, or send
the seed that was used to generate the mapping with each packet for a
random code. In order to cater to the capabilities of heterogeneous
receivers, we suggest the use of a systematic expansion/mapping, which
already guarantees condition (i) but also provides interesting
advantages for computationally constrained receivers. This is
explained in more detail in the operations of the receivers.

\subsubsection{Operations at Intermediate Nodes}
The operations at the intermediate nodes are quite
simple (Fig.~\ref{fig:System}-Network). Essentially, they will receive coded packets in $GF(2)$ of the
form $\sum_{j=1}^{n+r} \lambda_{i,j} C_j$, store them in their
buffers, and send recoded versions to the next hops, typically implementing an inner
decoder as described in the following. The recoding mechanism is what
defines the structure of the inner code of our system.  Recoding can
be done as a standard $GF(2)$ RLNC system would do, i.e., each packet
in the buffer has a probability of $1/2$ to be XORed with the others
to generate the recoded packet. However, the network can also support
other recoding mechanisms, such as recoding for tunable sparse network
coding~\cite{Feizi12,NCSmartGrid} and for Perpetual network
codes~\cite{VTC2014}, or even no recoding. We shall discuss
the effect of several of these inner codes as part of
Section~\ref{sec:sparse-inner-codes}.

In some scenarios, it may be possible to allow intermediate nodes to
know and exploit the outer code in the network
(Fig.~\ref{fig:System}-Network). The main goal of these recoders is to
maintain recoding with $GF(2)$ operations only. However, when the
intermediate node gathers $n$ linearly independent coded packets in
the inner code, it can choose to map back to the higher field in order
to decode the data and improve the quality of the recoded packets. The
rationale is that, at that point, it can recreate the original code
structure and generate the additional dimensions $r$ that are missing
in the inner code, thus speeding up the transmission process. Although
not required for the operation of the system, this mechanism can be
quite useful if the network's intermediate nodes are allowed to
trade-off throughput performance with complexity.

\subsubsection{Operations at the Receivers} We consider three main
types of receivers assuming that we enforce a systematic outer
code. Of course, intermediate operations or other precoding approaches
can be used enabling different end-to-end capabilities and
requirements.

Receivers using an \textit{Outer Decoder} will map back to the
original linear combination in $GF(2^h)$ (See
Fig.~\ref{fig:overview_decoders}-left). This means that only decoding
of an $n \times n$ matrix in the original field is required. The
benefit is that the receiver decodes after receiving $n$ independent
coded packets in $GF(2)$ with high probability. The key condition is
that the receivers need to know the mapping in $GF(2^h)$ to map back
using the options described for the source. These receivers use more
complex operations for decoding packets, but are awarded with less
delay than $GF(2)$ to recover the necessary linear combinations to
decode. We will show that increasing $r$ yields an exponential
decrease of the overhead due to non-innovative packets, i.e., coded
packets that do not convey a new independent linear combination to the
receiver.

On the other hand, receivers using an \textit{Inner Decoder} opt to
decode using $GF(2)$ operations for the $n+r$ relabelled packets (See
Fig.~\ref{fig:overview_decoders}-middle). This is known to be a
faster, less expensive decoding mechanism although there is some
additional cost of decoding a $(n + r) \times (n+r)$ matrix. If the
original mapping uses a systematic structure, decoding in this form
already provides the original packets without additional decoding in
$GF(2^h)$. The penalty for this reduced computational effort is the
additional delay incurred by having to wait for $n+r$ independent
linear combinations in $GF(2)$. Thus, there is no benefit over
standard $GF(2)$ but we provide compatibility with the other nodes.

Finally, receivers using a \textit{Combined Decoder} implement a
hybrid between inner and outer decoders with the aim of approaching
the decoding speed as inner decoders while retaining the same decoding
probability as outer decoders (See
Fig.~\ref{fig:overview_decoders}-right). This is achieved by decoding
the first $n$ coded packets using $GF(2)$ only. If decoding is
unsuccessful in $GF(2)$, all coded packets are mapped to $GF(2^h)$
over which the remaining decoding is performed. Hence, if $r << n$ the
decoding cost of the last $r$ is negligible compared to that of the
initial $n$ packets and decoding speed will approach that of an inner
decoder. We show this in Section~\ref{sec:implementation}.

\begin{figure*}[t]
\centering
\includegraphics[width=0.7\textwidth]{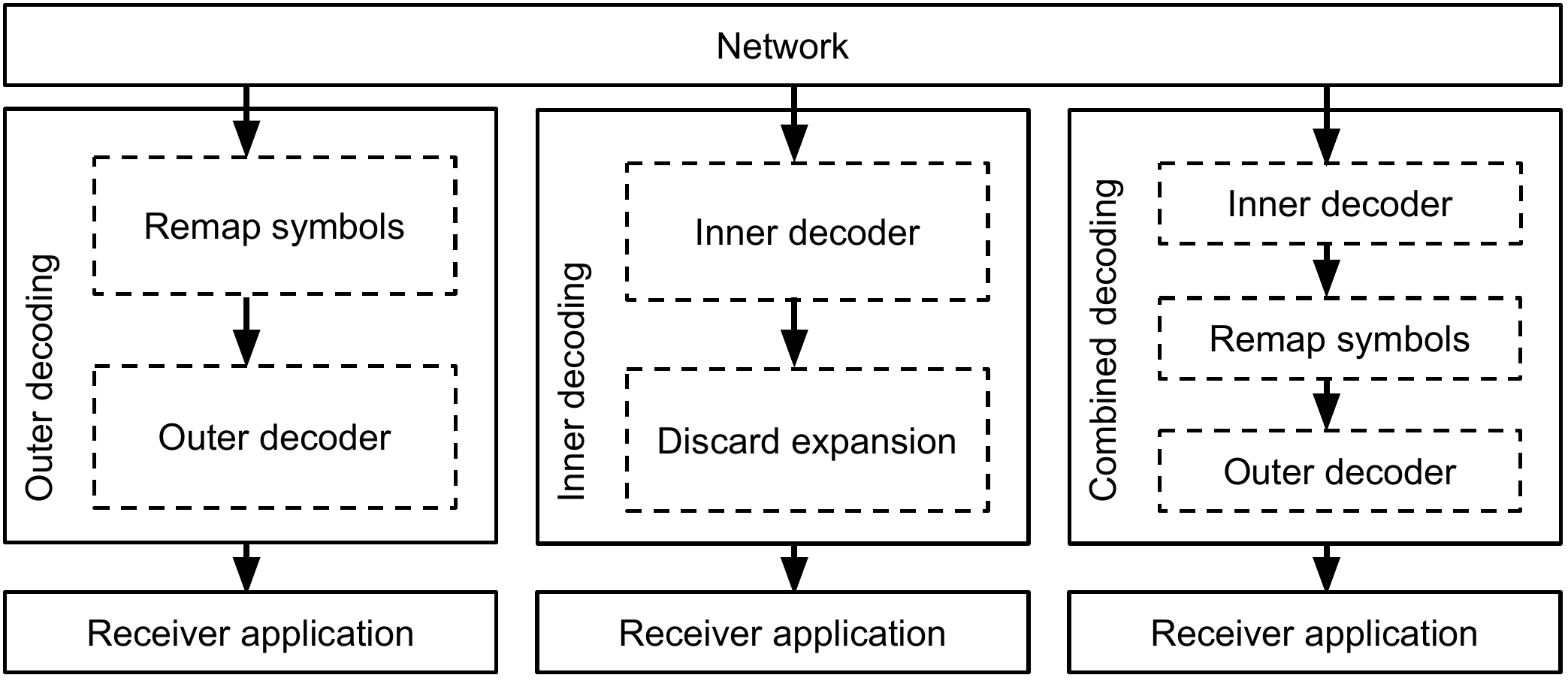}
\caption{ Overview of components of the three types of Fulcrum
  decoders.  Each decoder operates on the same data stream coming from
  the network. This makes it possible to support heterogeneous
  receivers using some mix of the three decoding types. }
\label{fig:overview_decoders}
\vspace{-0.5cm}
\end{figure*}

% \begin{figure}[!t]
% \centering
% \includegraphics[width=0.45\columnwidth]{}
% \caption{ Procedure to initialize the inner encoder and generate coded
%   packets in $GF(2)$, with $n=4$ and $r=2$. (Step 1) Initialize the
%   outer coding matrix shared by source and sinks using $GF(2^h)$
%   coefficients. In this example, we use a systematic outer
%   encoder. Thus, only the $r$ expansion symbols require some
%   computation, while the $n$ original source symbols are directly
%   mapped to the inner encoder. (Step 2) Use the chosen mapping to
%   encode the $n+r$ symbols preparing them for the inner encoder. (Step
%   3) Encode symbols from the inner code by performing bit-by-bit XORs
%   or the inner code packets and encoding vectors.}
% \label{fig:PacketFormat}
% \end{figure}

\subsection{Benefits}

\subsubsection{Simple is Green, Compatible, and Deployable}
The bit-wise XOR operations of $GF(2)$ in the network are easy and
cost-effective to implement in software/hardware and also energy
efficient because they require little processing. Their simplicity
makes them compatible with almost any device and can be processed at
high-speed. Having a simple approach where all packets in the network
are processed using $GF(2)$, reduces the additional logic and
processing in intermediate nodes.

\subsubsection{Supports Heterogeneous Receivers}
The use of an inner decoder is not only an option for resource-limited
devices. It can be a tool to reduce the decoding effort at receivers, in particular if $r$ is small. This is possible
if the outer code has a systematic structure, which allows for a
direct recovery of the original packets once decoding happens in
$GF(2)$, or other structure for the outer code, that allows for a fast
decoding, e.g., sparse outer code, Jacket matrix
structure~\cite{JacketM}. For example, consider the case of a single
hop broadcast network with heterogeneous channels. A device that has
the best channel of the receivers could choose to wait for additional
transmissions from the source, given that those transmissions will
happen in any case and the receiver might still invest energy to
receive them (e.g., to wait for the next generation). The additional
receptions can be used to decode using $GF(2)$ instead of doing the
mapping to decode in $GF(2^h)$ in this way reducing the energy
invested in decoding. On the other hand, a receiver with a bad channel
will attempt to use $GF(2^h)$ to decode in order to reduce its overall
reception time.

\subsubsection{Adaptable Performance}
Fulcrum can be configured to cover a wide range of decoding complexity
vs. throughput performance tradeoffs if we consider the use of sparse
outer codes. Let us illustrate this potential with a simple outer
code with a density $d$, i.e., the fraction of non-zero
coefficients. Let us consider some extreme cases. First, the case of
no extension $r=0$ and $d \approx \frac{n}{2}$, which corresponds to a
binary \ac{RLNC} code. Second, the case of no extension $r=0$ and $d
<< \frac{n}{2}$, which corresponds to a sparse binary \ac{RLNC}
code. Also, $r>>n$, $d = 1$, can achieve similar performance to
\ac{RLNC} with a high field size. By adjusting $r$ and $d$, Fulcrum
can be tuned to the desired complexity - throughput
tradeoff. Importantly, these parameters can also be changed on-the-fly, allowing for
adoption to time varying conditions. Of course, the choice of outer
code need not be only changed with the $d$ parameter. In fact, an LT
code, Perpetual code, or Tunable Sparse Network Code could also be
used for providing additional trade-offs.

\subsubsection{Practical Recoding}
Recoding can be performed exclusively over the inner code, while
encoding and decoding is performed over the outer code. As the inner
code is in $GF(2)$, its coding vector can easily be represented
compactly, which solves the challenge associated with enabling
recoding when a high field size is used~\cite{Heide11}.

\subsubsection{Spreading Complexity Across the Network}
%% Since both intermediate nodes and receivers can choose the
%% computational complexity they are capable to deal with, this enables
%% the network to spread the complexity across the network. 
Both intermediate nodes and receivers can choose the computational
complexity they are capable of dealing with. This enables the
complexity to be spread across nodes in the network.  Although we
currently focus on $n$ independent decision between receivers, future
work could consider a network that controls what devices will invest
more computational effort for specific flows.

\subsubsection{Security}
If security is the goal, our scheme provides a simple way to implement
some of the ideas in SPOC~\cite{Vilela08}. With Fulcrum, the mapping
of the outer decoder constitutes the secret key (or part of it) that
the source and destinations share and that, in contrast
to~\cite{Vilela08}, need not be sent over the network along with the
coded packets.  Using Fulcrum, we will not incur the large overhead
of SPOC, which sends two coding coefficients per original packet (one
encrypted, one without encryption). In fact, the end points (source
and receivers) can choose very large field sizes in the outer code
while maintaining $1 + r/n$ bits per packet in the generation as
overhead. Fulcrum can also provide security without the need to run
Gaussian elimination twice at the time of decoding~\cite{Zhang10}. As
a consequence Fulcrum does not need to trade--off field size and
generation size (and thus security) for overhead in the network and
complexity.

\subsubsection{New Designs while Supporting Backwards Compatibility}
Exploiting one code in the network and one underlying code end-to-end
provides senders and receivers with the flexibility to control their
service requirements while making the network agnostic to each flows'
characteristics. This has another benefit: new designs and services
can be incorporated with minimal or no effort from the network
operator and maintaining backward compatibility.

\vspace{-0.2cm}
\section{Analysis of Receiver Performance with RLNC as Inner
  Code}\label{sec:analys-rx-perf} Let us understand the delay
performance in receivers using the outer and combined
decoders. Receivers using an inner decoder correspond to a $GF(2)$
receiver that needs to get $n+r$ independent linear combinations
before decoding~\cite{Nistor11, Heide09, Lucani09FieldSize}. The key
question is to determine whether receiving $n$ independent coded
packets in $GF(2)$ means that the re-mapped version in $GF(2^h)$ is
full rank, i.e., original data can be
decoded. Section~\ref{sec:RSOuterCode} shows that this is possible for
a RS outer code under some minor conditions.

\subsection{Decoding Performance with a Reed-Solomon Outer Code}
\label{sec:RSOuterCode}

We have $B$, an $m \times n+r$-matrix over $\FF_2$, and $G$, an $(n
+r) \times n$-matrix over $\FF_{2^s} = \FF_q$. We have that $G$ is the
generator matrix of a RS code with length $n+r \leq q -1 = 2^s
-1$. The value of $m$ is related to the length of the incoming
messages, e.g., if it is a single $\FF_{2^s}$ symbol, then $m =
s$~bits. We remark that vectors are column vectors and that we
multiply on the right.

A RS code $C$, with dimension $n$, can be defined as the
vector space generated by the evaluation of the monomials $1, X,
\ldots, X^{n-1}$ at the points $\FF_q \setminus \{0\}$. Namely, let
$\alpha$ be a primitive element of $\FF_q$ and let $\ev: \FF_q [X] \to
\FF_q^{n+r}$, given by $\ev(f) = (f(\alpha^0), f(\alpha^1), \ldots,
f(\alpha^{q-2})$. One has that $C = \langle \{ \ev (X^i) : i = 0,
\ldots , n-1 \} \rangle $. A generator matrix is given by considering
as columns the evaluation of a monomial at $\FF_q \setminus \{0\}$.
The dual code of a RS code is given by Lemma~\ref{lem:dual}.

\begin{lem}\label{lem:dual}
  (Lemma 5.3.1~\cite{EMS}) Let $C$ be a Reed-Solomon code with dimension $n$, then the dual
  code of $C$ is given by $C^\perp = \langle \{ \ev (X^1) , \ldots ,
  \ev (X^{q-1-n}) \} \rangle$
\end{lem}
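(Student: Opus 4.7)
My plan is to prove Lemma~\ref{lem:dual} by (i) computing the standard inner product $\ev(X^i) \cdot \ev(X^j)$ explicitly as a geometric sum over powers of the primitive element $\alpha$, (ii) checking that for the ranges of exponents appearing on the two sides the sum always vanishes, and (iii) matching dimensions on both sides.

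First I would expand the inner product as
\[
\ev(X^i) \cdot \ev(X^j) \;=\; \sum_{k=0}^{q-2} (\alpha^k)^i (\alpha^k)^j \;=\; \sum_{k=0}^{q-2} (\alpha^{i+j})^k .
\]
This is a geometric series in $\beta := \alpha^{i+j}$. If $\beta \neq 1$ the sum telescopes to $(\beta^{q-1}-1)/(\beta-1) = 0$, since $\beta^{q-1} = 1$ for every nonzero element of $\FF_q$. If $\beta = 1$, i.e.\ $i+j \equiv 0 \pmod{q-1}$, the sum equals $q-1 = -1 \neq 0$ in $\FF_q$.

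Next I would check the exponent ranges. Taking the generators of $C$, $i \in \{0, 1, \ldots, n-1\}$, and the candidate generators of $C^\perp$, $j \in \{1, 2, \ldots, q-1-n\}$, the sum $i+j$ ranges from $1$ up to $(n-1)+(q-1-n) = q-2$. In particular $i+j$ never equals $0$ nor $q-1$, so $\alpha^{i+j} \neq 1$ and by the previous step every such inner product vanishes. Hence the space $V := \langle \ev(X^1), \ldots, \ev(X^{q-1-n}) \rangle$ is contained in $C^\perp$.

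Finally I would match dimensions. Since $C$ has dimension $n$ and lives in $\FF_q^{q-1}$, its dual has dimension $q-1-n$. On the other side, the vectors $\ev(X^1),\ldots,\ev(X^{q-1-n})$ are linearly independent: any nontrivial $\FF_q$-linear combination would be the evaluation of a nonzero polynomial of degree at most $q-1-n < q-1$ at the $q-1$ distinct nonzero points of $\FF_q$, which is impossible by the Vandermonde/degree argument. Thus $\dim V = q-1-n = \dim C^\perp$, and combined with $V \subseteq C^\perp$ this yields equality.

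The main (and essentially only) obstacle is the bookkeeping of the exponent range in step (ii): one must verify carefully that $i+j$ cannot hit either the singular value $0$ or $q-1$ modulo $q-1$, because those are exactly the values at which the geometric sum fails to vanish. Once this range check is in place, the rest of the argument is standard Vandermonde/duality.
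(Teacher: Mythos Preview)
Your proof is correct and is the standard direct argument: compute the inner product $\ev(X^i)\cdot\ev(X^j)$ as a geometric sum over the nonzero elements of $\FF_q$, verify the exponent range $1\le i+j\le q-2$ avoids the multiples of $q-1$, and conclude by a dimension count using the Vandermonde independence of the evaluation vectors.

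There is no proof in the paper to compare against: the lemma is simply quoted from an external reference (Lemma~5.3.1 of~\cite{EMS}) and used as a black box in the proof of Theorem~\ref{thm:RScondition}. Your argument supplies exactly the kind of self-contained verification one would expect for this classical fact, and it is entirely appropriate here.
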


%\begin{proof}
%  Let $i, j \in \{0, \ldots, n+r-1\}$ such that $i + j \equiv 0
%  ~(\text{mod~} q-1)$, then $\ev(X^i) \cdot \ev (X^j) =
%  \sum_{h=0}^{q-2} (\alpha^h)^{i+j} = \sum_{h=0}^{q-2} (\alpha^h)^{0}
%  = q-1 \neq 0 ~(\text{in}~\mathbb{F}_q)$. However, if $i + j = c
%  \not\equiv 0 ~(\text{mod~} q-1)$, then$$\ev(X^i) \cdot \ev (X^j)=
%  \sum_{h=0}^{q-2} (\alpha^h)^{c} = \sum_{h=0}^{q-2} (\alpha^{c})^{h}
%  = \frac{1-(\alpha^c)^{q-1}}{1 - \alpha^c} = 0.$$Note that $\alpha^c
%  \neq 1$ since $ c \not\equiv 0 ~(\text{mod~} q-1)$. Taking into
%  account the dimension of $C^\perp$ and the linearity of the codes
%  the result holds.
%\end{proof}

We consider the $m \times n$-matrix $BG$ and denote the associated
linear function by $\varphi$. We assume that $B$ and $G$ have full
rank and we wonder whether $\dim (\varphi (V)) < \dim V$ for a vector
subspace $V \subseteq \FF_q^n$. Since $\varphi$ is a linear function,
one has that the dimension of the image plus the dimension of the
kernel is equal to the dimension of the original space. Therefore, we
wonder whether, $\dim (\ker (\varphi) ) > 0$.

In order to prove the main result, we shall introduce the cyclotomic
coset containing $a$ in $\FF_q = \FF_{2^s}$, $I_a = \{a , 2a \modulo
q-1, 2^2 a \modulo q-1, \ldots , 2^{s-1} a \modulo q-1 \}$. For
instance, for $q= 2^4$, the different cyclotomic cosets
are $I_0=\{0\}, I_1=\{1,2,4,8\}, I_3=\{3,6,12,9\}, I_5=\{5,10\},
I_7=\{7,14,13,11\}.$ One has that $I_1=I_2=I_4=I_8$, but usually one
denotes the coset by the smallest number.
We can now characterize when  $\dim (\ker (\varphi) ) = 0$.

\begin{thm}\label{thm:RScondition}
  Let $C \subseteq \FF_{2^s} = \FF_q$ be a Reed-Solomon code with
  dimension $n$ and the linear map $\varphi$ defined above. One has
  that $\dim (\ker (\varphi) ) = 0$ if and only if $n \ge 2^{s-1}$.
\end{thm}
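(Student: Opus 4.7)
The plan is to convert the question about $\ker(\varphi)$ into a combinatorial one about cyclotomic cosets modulo $q-1$, exploiting the structural fact that $B$ is a binary matrix. Because $G$ has full rank $n$, one has $\dim \ker(\varphi) = \dim_{\FF_q}(C \cap \ker_{\FF_q}(B))$, where $C := \mathrm{Im}(G)$ is the Reed--Solomon code of dimension $n$. Since $B$ has entries in $\FF_2$, the subspace $\ker_{\FF_q}(B)$ is stable under the coordinate-wise Frobenius $\sigma:(x_1,\ldots,x_{n+r})\mapsto(x_1^2,\ldots,x_{n+r}^2)$; hence any nonzero $c$ in the intersection drags its entire Frobenius orbit $c,\sigma(c),\ldots,\sigma^{s-1}(c)$ along with it, and the $\FF_q$-span of that orbit must sit inside $\ker_{\FF_q}(B)$.

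Next I would pass to the frequency (Mattson--Solomon) domain via Lemma~\ref{lem:dual}: membership $c\in C$ is equivalent to the discrete Fourier transform $\hat c(j)=\sum_k c_k\alpha^{jk}$ vanishing on the defining set $Z=\{1,\ldots,q-1-n\}$. A direct computation shows that $\sigma$ on $c$ corresponds to the doubling permutation $j\mapsto 2j\bmod(q-1)$ on the support of $\hat c$, together with squaring of the values; this is exactly the action whose orbits are the cyclotomic cosets $I_a$ introduced just before the theorem. Consequently, the $\FF_q$-dimension of the span of the Frobenius orbit of a codeword coincides with the size of the cyclotomic orbit carrying its DFT, and the existence of a ``short-orbit'' codeword in $C$ becomes a statement about whether the complement of $Z$ contains an entire nontrivial cyclotomic coset.

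Finally I would carry out the combinatorial count. Using the elementary fact (illustrated by the $q=16$ example preceding the theorem, and verifiable directly from the doubling map) that every nontrivial cyclotomic coset modulo $2^s-1$ has its smallest representative at most $2^{s-1}-1$, the condition that no such coset evades $Z$ rearranges to $n\ge 2^{s-1}$, giving the stated threshold. The main obstacle I anticipate is in the middle step: making the translation between ``the Frobenius orbit fits in $\ker_{\FF_q}(B)$'' and ``the cyclotomic coset structure of the DFT support'' tight enough in both directions to obtain an if-and-only-if rather than merely a one-sided implication. The converse in particular will require an explicit construction of a ``bad'' codeword whose Frobenius orbit realises the forbidden structure when the numerical condition on cosets fails, and for this I expect the MDS property of $C$ and the exact form of $C^\perp$ from Lemma~\ref{lem:dual} to be used in an essential way.
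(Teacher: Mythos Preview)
Your plan lands in the right neighbourhood—cyclotomic cosets modulo $q-1$—but you are working on the wrong side of the duality, and this produces a concrete error in the final step.

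The paper does not analyse Frobenius orbits of codewords of $C$. It argues instead that $BGx=0$ with $x\neq 0$ means some nonzero $c=Gx\in C$ is annihilated by every (binary) row of $B$; the binary rows that impose no constraint whatsoever on $C$ are precisely the elements of the subfield subcode $\sub(C^\perp)$. It then invokes the standard formula $\dim\sub(C^\perp)=\#\{\,I_a: I_a\subseteq\{1,\dots,q-1-n\}\,\}$ and checks that no nontrivial coset fits inside $\{1,\dots,q-1-n\}$ exactly when $n\ge 2^{s-1}$, using that every nontrivial coset contains an element $\ge 2^{s-1}$ (its largest element, by the cyclic-shift picture). So the relevant cosets are those \emph{contained in} $Z$, and the relevant extremal fact concerns the \emph{largest} element of a coset.

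Your route, via the primal code and cosets in the \emph{complement} of $Z$, yields the reversed threshold. ``No nontrivial coset evades $Z$'' means every nontrivial coset meets $\{1,\dots,q-1-n\}$, i.e.\ has its minimum $\le q-1-n$; since the largest such minimum over all nontrivial cosets is exactly $2^{s-1}-1$ (realised by $I_{2^{s-1}-1}$), this rearranges to $n\le 2^{s-1}$, not $n\ge 2^{s-1}$. More fundamentally, the equivalence ``$\ker\varphi\neq 0\Leftrightarrow C$ contains a short-orbit codeword'' is not what your orbit argument establishes. The all-ones vector $\ev(1)\in C$ is Frobenius-fixed (its DFT is supported on $I_0$), so $C$ \emph{always} contains a codeword with a one-dimensional orbit span; your exclusion of the trivial coset is therefore doing real work that the sketch does not justify. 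To make the primal-side argument go through you would need to tie the orbit span of $c$ not just to the coset structure of $\operatorname{supp}\hat c$ but to the specific $\ker_{\FF_q}(B)$ at hand, and carrying that out cleanly essentially forces you to dualize to binary vectors in $C^\perp$—which is the paper's proof.
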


\begin{proof}
  One has that $\varphi$ is the composition of two linear maps, the
  ones associated to $G$ and $B$. We have that $G$ is a generator
  matrix and therefore, it is injective. Hence, $BG x = 0 $ if and
  only if $ c=Gx \in \ker (B)$. That is, $\dim (\ker (\varphi) ) > 0$
  if the rows of $B$ are orthogonal to $c$, which is a word of
  $C$. Therefore, the rows of $B$ are words in the dual code of $C$.

  By Lemma \ref{lem:dual}, the dual code of $C$ is given by $C^\perp =
  \langle \{ \ev (X^1) , \ldots , \ev (X^{q-1-n}) \} \rangle$. We have
  that $C^\perp \subseteq \FF_q^{n+r}$ but the rows of $B$ are over
  $\FF_2$. Hence we should consider the subfield subcode of $C^\perp$,
  that is
$$
\sub (C^\perp) = \{c \in \FF_2^{n+r}:  c \in C^\perp \}.
$$The columns of the generator matrix of $\sub(C^\perp)$ are the rows of $B$ which reduce the dimension of the image.
By \cite{MS} and \cite[Theorem III.8]{Fer}, one has that $$ \dim (\sub
(C^\perp)) = \# \{ I_j : I_j \subseteq \{1,\ldots , q-1-n\} \}.
$$
That is, we shall only consider exponents that are in a cyclotomic
coset that is contained in $\{1,\ldots , q-1-n\}$. Clearly $I_0 \not
\subseteq \{ 1, \ldots , q-1-n \}$. Let $k \ge 2^{s-1}$, then one has
that $q-1-n < 2^{s-1}$ and therefore the cyclotomic coset $I_1 =
\{1,2,2^2, \ldots, 2^{s-1}\}$ is not contained in $\{ 1, \ldots , q-1-n
\}$. Finally, let $j > 2^{s-1}$, then $I_j \neq I_0, I_1$, and we have
that $I_j \not \subseteq \{ 1, \ldots , q-1-n \}$ since $j \in I_j$
and $q-1-n < 2^{s-1}$. Therefore $\sub (C^\perp) = \{ 0 \}$ and $\dim
(\ker (\varphi) ) = 0$.

Let us consider now $n < 2^{s-1}$, then $I_1 = \{1,2,2^2, \ldots,
2^{s-1}\} $ is contained in $\{ 1, \ldots , q-1-n \}$ and $\dim (\sub
(C^\perp)) \ge s$. Thus $\dim (\ker (\varphi) ) >0$.
 \end{proof}

%%  The explicit generators of $\sub (C^\perp)$ can be given considering
%%  the trace of a polynomial by \cite[Theorem III.8]{Fer}. One shall
%%  also map primitive elements in $\FF_{2^{s'}}$ to $\FF_{2^{s}}$ for
%%  $s' \mid s$, therefore one may consider to use Conway polynomials
%%  \cite{Con}.

%NEW
Explicit generators of $\sub (C^\perp)$ to identify cases of linear
dependence for $n < 2^{s-1}$ can be obtained by using results in
\cite{MS}, \cite[Theorem III.8]{Fer} as follows.
\begin{thm}
 Let $C$ be a Reed-Solomon code with dimension $n$, the dimension of $\sub (C^\perp)$ is
$\sum_{I_a \subseteq \{1,\ldots q-1-n\}} \# I_a, $
and a basis is given by $\{  \ev (f_{I_a,\beta^j}) : j \in \{0, \ldots, \#I_a -1\}  \},
$ where $\beta=\alpha^{(2^s-1)/(2^{\# I_a} -1)}$, i.e. a primitive element of $\FF_{2^{\# I_a}} \subseteq \FF_{2^s}$, and $f_{I_a,\beta} = \beta X^a + \beta^2 X^{2a} + \cdots + \beta^{2^{\#I_a -1}} X^{2^{\#I_a -1}a}$.
\end{thm}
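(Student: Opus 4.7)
The plan is to reduce the subfield-subcode characterization to a cosetwise analysis of polynomial coefficients, where $C^\perp$ is viewed as the space of evaluations $\ev(f)$ with $f = \sum_{i=1}^{q-1-n} a_i X^i$, and then verify that the proposed polynomials $f_{I_a,\gamma}$ realize exactly one independent codeword for every degree of freedom so enumerated.

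First I would characterize when $\ev(f) \in \FF_2^{n+r}$: this requires $f(\alpha^j) = f(\alpha^j)^2$ for every $j$, and since in characteristic two $f(\alpha^j)^2 = \sum a_i^2 \alpha^{2ij}$, bijectivity of $\ev$ on $\FF_q[X]/(X^{q-1}-1)$ (as in Lemma~\ref{lem:dual}) converts this into the polynomial identity $a_{2i \bmod (q-1)} = a_i^2$. Because the doubling map $i \mapsto 2i \bmod (q-1)$ acts exactly by the cyclotomic cosets, this relation lives strictly inside one coset $I_a$, pinning down $a_{2^k a} = a_a^{2^k}$ and forcing $a_a \in \FF_{2^{\#I_a}}$ via the closure $2^{\#I_a} a \equiv a \pmod{q-1}$.

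Next I would impose the support restriction $a_i = 0$ for $i \notin \{1,\ldots,q-1-n\}$. Since the Frobenius relation treats each coset as an indivisible block, a coset that is only partially contained in the support window must vanish entirely, while a coset fully contained contributes an $\FF_2$-subspace isomorphic to $\FF_{2^{\#I_a}}$. Summing over the valid cosets yields the claimed dimension $\sum_{I_a \subseteq \{1,\ldots,q-1-n\}} \#I_a$, consistent with the general subfield-subcode formulae of~\cite{MS} and~\cite[Theorem~III.8]{Fer}. For the basis, I would check that $f_{I_a,\gamma} = \sum_{k=0}^{\#I_a-1} \gamma^{2^k} X^{2^k a}$ has precisely the coefficient sequence required by the Frobenius relation whenever $\gamma \in \FF_{2^{\#I_a}}$, so $\ev(f_{I_a,\gamma}) \in \sub(C^\perp)$ and the map $\gamma \mapsto \ev(f_{I_a,\gamma})$ is $\FF_2$-linear and injective ($\gamma$ is recoverable as the coefficient of $X^a$). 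The choice $\beta = \alpha^{(2^s-1)/(2^{\#I_a}-1)}$ has multiplicative order $2^{\#I_a}-1$, so it is a primitive element of the subfield $\FF_{2^{\#I_a}}$, and then $\{1,\beta,\ldots,\beta^{\#I_a-1}\}$ is an $\FF_2$-basis of that subfield; applying $\gamma \mapsto \ev(f_{I_a,\gamma})$ produces $\#I_a$ independent vectors inside the cosetwise component. Finally, polynomials supported on different cosets have disjoint exponent sets, and injectivity of $\ev$ preserves $\FF_2$-independence across cosets, so the union is independent and of the right cardinality to be a basis.

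The main technical obstacle I anticipate is the first translation, namely turning ``$\ev(f) \in \FF_2^{n+r}$'' into the coefficient identity $a_{2i} = a_i^2$ on each cyclotomic coset; after that step the argument becomes bookkeeping about cyclotomic cosets together with the standard fact that a multiplicative generator of $\FF_{2^{\#I_a}}$ supplies a polynomial basis of that subfield over $\FF_2$. A minor subtlety worth double-checking is the implicit assumption $n+r = q-1$ inherited from Lemma~\ref{lem:dual}, which is what lets us identify the dual with polynomial evaluations on all of $\FF_q^\ast$ in the first place.
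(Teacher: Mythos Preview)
Your argument is correct and self-contained. The key identity $a_{2i \bmod (q-1)} = a_i^2$, obtained by comparing $\ev(f)$ with $\ev(f)^2$ and using that $\ev$ is a bijection on $\FF_q[X]/(X^{q-1}-1)$, is exactly what makes the coefficients live and die by cyclotomic cosets; from there the dimension count and the basis verification are, as you say, bookkeeping. Your observation that $\{1,\beta,\ldots,\beta^{\#I_a-1}\}$ is an $\FF_2$-basis of $\FF_{2^{\#I_a}}$ (because a primitive element has degree $\#I_a$ over $\FF_2$) cleanly finishes the basis claim, and your remark that distinct cosets give polynomials with disjoint monomial supports handles the cross-coset independence. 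The caveat $n+r=q-1$ is well spotted and is indeed implicit in the paper's use of $\ev$ on all of $\FF_q^\ast$.

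By contrast, the paper does not supply its own proof of this theorem: it is stated as a direct consequence of results in \cite{MS} and \cite[Theorem~III.8]{Fer}, with only the examples $C_7$ and $C_8$ worked out afterwards. So your route is genuinely different in that it reproves the relevant piece of Delsarte-type subfield-subcode theory from scratch via the Frobenius action on evaluation polynomials, rather than invoking it as a black box. What this buys you is a proof that is elementary and fully internal to the paper's setup; what the paper's citation buys is brevity and a pointer to the more general structural theory behind the formula.
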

As an example, let $C_8$ be the Reed-Solomon code with dimension
$n=2^3$ in $\FF^{15}_{2^4}$. We have that $C_8 = \langle \{ \ev
(X^0) , \ldots , \ev (X^7) \} \rangle $, $C_8^\bot = \langle \{ \ev
(X^1) , \ldots , \ev (X^7) \} \rangle $ and $\sub (C_8^\perp) = \{0
\}$ because $I_a \not \subseteq \{1 , \ldots , 7\}$ for any $a$ and
$\dim (\ker (\varphi) )= 0$.  
Consider another example with a Reed-Solomon code $C_7$
with dimension $n=7$ in $\FF^{15}_{2^4}$. We have that $C_7 = \langle
\{\ev (X^0) , \ldots , \ev (X^6) \} \rangle $, $C_7^\bot = \langle \{
\ev (X^1) , \ldots , \ev (X^8) \} \rangle $ and $\dim (\sub
(C_7^\perp)) = 4$ because $I_1 = \{1,2,4,8 \} \subseteq \{1 , \ldots ,
8\}$. 
A basis for $\sub (C_7^\perp)$ is given by $\{ \ev (f_{I_1,1}) ,
\ev (f_{I_1,\alpha}) , \ev (f_{I_1,\alpha^2}), \ev (f_{I_1,\alpha^3})
\}$, that is $$\{ \ev(X + X^2 + X^4 + X^8) , \ev(\alpha X + \alpha^2
X^2 + \alpha^4 X^4 + \alpha^8 X^8),$$ $$ \ev(\alpha^2 X + \alpha^4 X^2
+ \alpha^8 X^4 + \alpha X^8), \ev(\alpha^3 X + \alpha^6 X^2 +
\alpha^{12} X^4 + \alpha^9 X^8) \}.$$ Therefore, a generator matrix for
$\sub (C_7^\perp)$ is
$$\colvec{0 0 0 1 0 0 1 1 0 1 0 1 1 1 1\\
0 0 1 0 0 1 1 0 1 0 1 1 1 1 0\\
0 1 0 0 1 1 0 1 0 1 1 1 1 0 0\\
1 0 1 1 1 0 1 0 1 1 1 1 0 0 0}^T.$$

Future work shall consider exploiting such generators to improve the
efficiency of the decoder in these corner cases.  In order to consider
$\beta$ as a primitive element of $\FF_{2^{\# I_a}}$ in $\FF_{2^s}$,
one may consider to use Conway polynomials for defining finite fields
\cite{Con}.

%% Consider now the Reed-Solomon code $C_5$ with dimension $n=5$ in $\FF^{15}_{2^4}$. We have that $C_5  = \langle  \{\ev (X^0) , \ldots , \ev (X^4) \} \rangle $, $C_5^\bot  = \langle  \{ \ev (X^1) , \ldots , \ev (X^10) \} \rangle $ and $\dim (\sub (C_5^\perp)) = 6$ because $I_1 = \{1,2,4,8 \} \subseteq \{1 , \ldots , 10\}$ and $I_5 = \{5,10 \} \subseteq \{1 , \ldots , 10\}$.  A basis  for $\sub (C_5^\perp)$  is given by the previous 4 vectors that generate $\sub (C_7^\perp)$ and $\{  \ev (f_{I_5,1}) , \ev (f_{I_5,\alpha^5}) \}$, since $\alpha^5 = \alpha^{((16-1)/(4-1))}$, that is $$\{ X^5 + X^{10}, \alpha^5 X^5 + \alpha^{10} X^{10} \}$$ Thus, a generator matrix for $\sub (C_7^\perp)$ is 
%% $$\begin{bmatrix}
%% 0 &0 &0 &1 &0 &1\\
%% 0 &0 &1 &0 &1 &1\\
%% 0 &1 &0 &0 &1 &0\\
%% 1 &0 &0 &1 &0 &1\\
%% 0 &0 &1 &1 &1 &1\\
%% 0 &1 &1 &0 &1 &0\\
%% 1 &1 &0 &1 &0 &1\\
%% 1 &0 &1 &0 &1 &1\\
%% 0 &1 &0 &1 &1 &0\\
%% 1 &0 &1 &1 &0 &1\\
%% 0 &1 &1 &1 &1 &1\\
%% 1 &1 &1 &1 &1 &0\\
%% 1 &1 &1 &0 &0 &1\\
%% 1 &1 &0 &0 &1 &1\\
%% 1 &0 &0 &0 &1 &0\\
%% \end{bmatrix}.$$

%NEW

\subsection{Delay Modelling and Performance} 
We focus on the analysis for a single receiver first to understand the
potential.  For the analysis, let us assume that a RS code
over $GF(2^s)$ and $n \geq 2^{s-1}$ is used for the outer code, so
that receiving $n$ independent coded packets in $GF(2)$ guarantees
that the re-mapped version in $GF(2^h)$ can be decoded (using the
result from Theorem~\ref{thm:RScondition}). We reuse the models for
RLNC coding from~\cite{Lucani09FieldSize,
  Nistor11}. Fig.~\ref{fig:MarkovChain} (a) shows the Markov chain
representing the process of reception of independent linear
combinations at the receiver given the reception of a new coded packet
over $GF(2)$ using an RLNC inner code. Each stage represents the
missing independent linear combinations in $GF(2)$ in order to decode
using only $GF(2)$ operations. In this case, we assume that the
receiver is attempting to decode in $GF(2)$ even when the source has
made an expansion to $n+r$ dimensions. This corresponds to a receiver
using an inner decoder. Fig.~\ref{fig:MarkovChain} (b) on the other
hand shows the process for a successful outer (and combined)
decoder. In this case, the underlying $GF(2)$ process needs only run
until $n$ independent linear combinations in $GF(2)$ are received,
which are mapped back into the $GF(2^h)$ and decoded for the outer
decoder. The combined decoder performs partial decoding in $GF(2)$
before attempting to use the high field, but this does not affect the
following analysis, only decoding complexity. Thus, the last states
starting in $r-1$ until $0$ are not visited, as state $r$ became an
absorbing state. If a different precoding structure is used, there
will be some probability of visiting the states beyond $r$. However,
if we use a large enough field size, this effect will be negligible
and the process described in Fig.~\ref{fig:MarkovChain} (b) will be a
very good approximation of the expected performance.  Using the
intuition from Fig.~\ref{fig:MarkovChain} (b), the mean number of
packets received from the network to decode using an outer (or
combined) decoder considering $r$ additional dimensions is given by
\begin{equation} \label{eq:MeanPacketsType1}
E\left[ N_{GF(2)} (r)\right] = \sum_{i=r+1} ^{n+r} \frac{1}{1-2^{-i}} = n +\sum_{i=r+1} ^{n+r} \frac{1}{2^{i} -1}.
\end{equation}

\begin{figure}[t]
\centering
\includegraphics[width=0.45\textwidth]{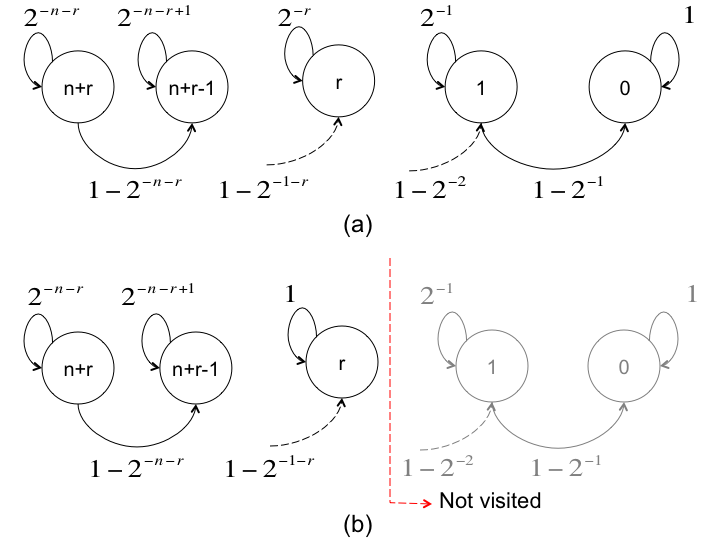}
\caption{ Markov Chain describing the reception process at a receiver
  in (a) a classical $GF(2)$ network, i.e., where both source and
  network use $GF(2)$ operations, and (b) exploiting our expansion
  idea, i.e., where the network treats expanded packets as $GF(2)$
  packets while the source uses $GF(2^h)$ operations for the
  expansion.}
\label{fig:MarkovChain}
\vspace{-0.3cm}
\end{figure} 

Lemma~\ref{lem:OverheadType1} shows that the overhead due to
additional $GF(2)$ coded packet receptions when using an outer
or combined decoder decreases exponentially with $r$.
\begin{lem} \label{lem:OverheadType1}
Considering an outer or a combined decoder with an MDS inner code, we have
%%\begin{equation}
$E\left[ N_{GF(2)} (r)\right] = n + 2^{-r} \times \theta(n)$,
%%\end{equation}
for some $\theta(n) \in [  1-2^{-n} , 2 - 2^{-n+1}  ]$.
\end{lem}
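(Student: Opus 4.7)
The plan is to start from the exact expression for $E[N_{GF(2)}(r)]$ given in equation (\ref{eq:MeanPacketsType1}) and to show that the residual sum, once $n$ is subtracted, can be squeezed between two geometric series whose ratio is $2^{-r}$. Concretely, I would define
\[
\theta(n) \;:=\; 2^{r}\sum_{i=r+1}^{n+r} \frac{1}{2^{i}-1},
\]
so that by construction $E[N_{GF(2)}(r)] - n = 2^{-r}\theta(n)$, and then the whole task reduces to showing $1-2^{-n} \le \theta(n) \le 2-2^{-n+1}$.

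For the lower bound, I would use the trivial estimate $\tfrac{1}{2^{i}-1} \ge \tfrac{1}{2^{i}}$ term by term. Multiplying by $2^{r}$ and summing a geometric series,
\[
\theta(n) \;\ge\; 2^{r}\sum_{i=r+1}^{n+r} 2^{-i} \;=\; \sum_{j=1}^{n} 2^{-j} \;=\; 1 - 2^{-n}.
\]
For the upper bound, the key inequality is $\tfrac{1}{2^{i}-1} \le \tfrac{2}{2^{i}}$, which is equivalent to $2^{i} \ge 2$ and hence holds for every $i \ge 1$; since the summation starts at $i=r+1\ge 1$, this applies uniformly. Then
\[
\theta(n) \;\le\; 2^{r}\sum_{i=r+1}^{n+r} \frac{2}{2^{i}} \;=\; 2\sum_{j=1}^{n} 2^{-j} \;=\; 2 - 2^{-n+1}.
\]

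Combining the two bounds gives $\theta(n) \in [1-2^{-n},\,2-2^{-n+1}]$ as claimed, and the decomposition $E[N_{GF(2)}(r)] = n + 2^{-r}\theta(n)$ follows directly. The only thing one has to be careful about is that, although the quantity $\theta(n)$ defined above also depends on $r$, the enclosing interval depends only on $n$; this is precisely what the lemma asserts. There is no real obstacle here beyond verifying the two elementary per-term inequalities on $1/(2^{i}-1)$, after which the geometric-series computations are immediate; the lemma is essentially a quantitative restatement of the fact that a single additional dimension in the outer code halves the residual expected overhead.
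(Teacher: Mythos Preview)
Your proof is correct and essentially identical to the paper's: the paper also bounds the residual sum termwise via $2^{i-1}\le 2^{i}-1\le 2^{i}$ (equivalently your $2^{-i}\le 1/(2^{i}-1)\le 2\cdot 2^{-i}$) and then sums the resulting geometric series to obtain $n+2^{-r}-2^{-n-r}\le E[N_{GF(2)}(r)]\le n+2^{-r+1}-2^{-n-r+1}$. Your explicit definition of $\theta(n)$ and your remark that the enclosing interval is independent of $r$ are nice clarifications, but the underlying argument is the same.
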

\begin{proof}
  The proof follows from finding an upper and lower bound on $E\left[
    N_{GF(2)}(r)\right]$ described in Eq.~\eqref{eq:MeanPacketsType1}.
  To derive the upper bound, we use the fact that $2^{i-1} \leq 2^i
  -1$ for $i\geq 1$ to convert into the sum of a set of elements of
  the geometric series. Thus,
%\begin{equation}
$E\left[ N_{GF(2)}(r)\right] \leq  n +\sum_{i=r+1} ^{n+r} 2^{-i+1} = n + 2^{-r+1} - 2^{-n-r+1}. \notag
$%\end{equation}
The lower bound follows a similar argument, but using the fact that
$2^i \geq 2^i -1$ for $i\geq 1$. Thus,
%\begin{equation}
$E\left[ N_{GF(2)}(r)\right] \geq  n +\sum_{i=r+1} ^{n+r} 2^{-i} = n + 2^{-r} - 2^{-n-r}. \notag
$%\end{equation}
\end{proof}

Another interesting result for receivers with outer and combined
decoders is shown in Lemma~\ref{lem:OverheadType1Var}, where we proof
that the variance of $N_{GF(2)} (r)$ decreases exponentially with
$r$.
\begin{lem} \label{lem:OverheadType1Var} Considering a receiver using
  an outer or combined decoder with an MDS inner code, then
%\begin{equation}
$var\left( N_{GF(2)} (r)\right) =  O(2^{-r}).$
%\end{equation}
\end{lem}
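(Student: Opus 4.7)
The plan is to exploit the independence structure implicit in the Markov chain of Fig.~\ref{fig:MarkovChain}(b). The number of $GF(2)$ packets needed to move from state $i$ to state $i-1$ is a geometric random variable $X_i$ with success probability $p_i = 1 - 2^{-i}$, and the transitions between distinct states are independent. Hence
\begin{equation*}
N_{GF(2)}(r) = \sum_{i=r+1}^{n+r} X_i,
\end{equation*}
and by independence the variance is additive:
\begin{equation*}
\mathrm{var}\bigl(N_{GF(2)}(r)\bigr) = \sum_{i=r+1}^{n+r} \mathrm{var}(X_i) = \sum_{i=r+1}^{n+r} \frac{2^{-i}}{(1-2^{-i})^2}.
\end{equation*}

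Next I would bound the denominator uniformly. Since $r \ge 0$, every index $i$ in the sum satisfies $i \ge 1$, so $1 - 2^{-i} \ge 1/2$ and therefore $(1-2^{-i})^{-2} \le 4$. Substituting this bound term-by-term yields
\begin{equation*}
\mathrm{var}\bigl(N_{GF(2)}(r)\bigr) \le 4 \sum_{i=r+1}^{n+r} 2^{-i} = 4 \cdot 2^{-r}\bigl(1 - 2^{-n}\bigr) \le 4 \cdot 2^{-r},
\end{equation*}
which is $O(2^{-r})$ as claimed. A matching lower bound $\mathrm{var}(N_{GF(2)}(r)) \ge 2^{-r}(1-2^{-n})$ (keeping only $(1-2^{-i})^{-2} \ge 1$) shows the exponent $-r$ is tight, mirroring the sharpness of Lemma~\ref{lem:OverheadType1}.

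I do not anticipate a real obstacle: the only subtlety is justifying independence of the $X_i$, which follows from the memoryless structure of the Markov chain (each arrival is an independent Bernoulli trial whose success probability depends only on the current rank deficit), and from the fact that an outer or combined decoder treats state $r$ as absorbing so the indices of summation start at $r+1$ rather than $1$. Everything else is a one-line geometric-series computation.
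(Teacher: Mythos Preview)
Your proof is correct and follows essentially the same route as the paper: decompose $N_{GF(2)}(r)$ as a sum of independent geometric random variables via the Markov chain, use additivity of variance, and bound the resulting sum. The only difference is in the final bounding step: the paper rewrites the variance as $\sum_i P_i^{-2} - E[N_{GF(2)}(r)]$ and bounds $P_i^{-1}$ uniformly by $(1-2^{-r-1})^{-1}$, arriving at $\mathrm{var}\le (n+2^{-r+1})/(2^{r+1}-1)$, whereas you bound $(1-2^{-i})^{-2}\le 4$ directly and sum the geometric series. Your bound $4\cdot 2^{-r}$ is in fact sharper, being uniform in $n$, while the paper's carries an implicit linear factor in $n$; both suffice for the stated $O(2^{-r})$ conclusion.
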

\begin{proof}
  The proof follows by bounding the variance of $N_{GF(2)}(r)$.
  Defining $P_i = 1 - 2^{-n-r+i-1}$ and using independence in the
  Markov chain, it is straightforward to prove that
%\begin{equation}
  $var\left( N_{GF(2)} (r)\right) = \sum_{i = 1}^{n}
  \frac{1-P_i}{P_i^2} $. %\end{equation} 
  After some manipulations and
  using the fact that $2^{-r-1} \geq 2^{-n -r + i -1}$ for $i= 1, ...,
  n$ then,
\begin{equation}
  var\left( N_{GF(2)} (r)\right) =\sum_{i = 1}^{n} \frac{1}{\left( 1 - 2^{-n-r+i-1}\right)^2} - E\left[ N_{GF(2)}(r)\right]\leq \frac{E\left[ N_{GF(2)}(r)\right]}{2^{r+1} -1 } \leq \frac{n + 2^{-r+1}}{2^{r+1} -1},
\end{equation}
which concludes the proof.
% After some manipulations,
% \begin{equation}
%   var\left( N_{GF(2)} (r)\right) =\sum_{i = 1}^{n} \frac{1}{\left( 1 - 2^{-n-r+i-1}\right)^2} - E\left[ N_{GF(2)}(r)\right].
% \end{equation}
% Using the fact that $2^{-r-1} \geq 2^{-n -r + i -1}$ for $i= 1, ...,
% n$ then
% \begin{equation}
%   var\left( N_{GF(2)} (r)\right) \leq \frac{E\left[ N_{GF(2)}(r)\right]}{2^{r+1} -1 } \leq \frac{n + 2^{-r+1}}{2^{r+1} -1},
% \end{equation}
% which concludes the proof.
\end{proof}

\begin{figure}
  \centering
\subfloat[ \label{fig:CDF}]{{\includegraphics[width=0.45\linewidth]{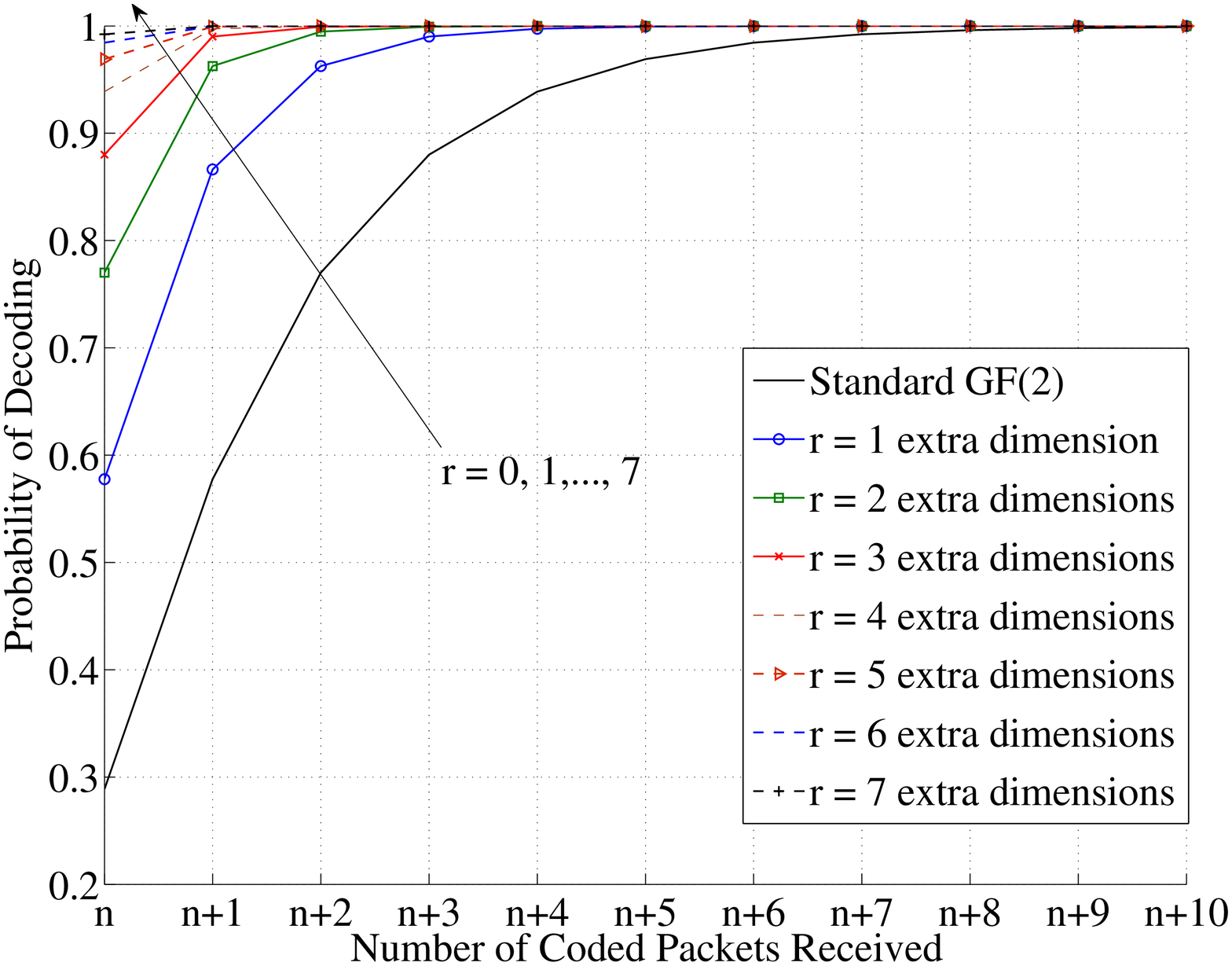}
 }}%
\qquad
    \subfloat[\label{fig:PMF} ]{{\includegraphics[width=0.45\linewidth]{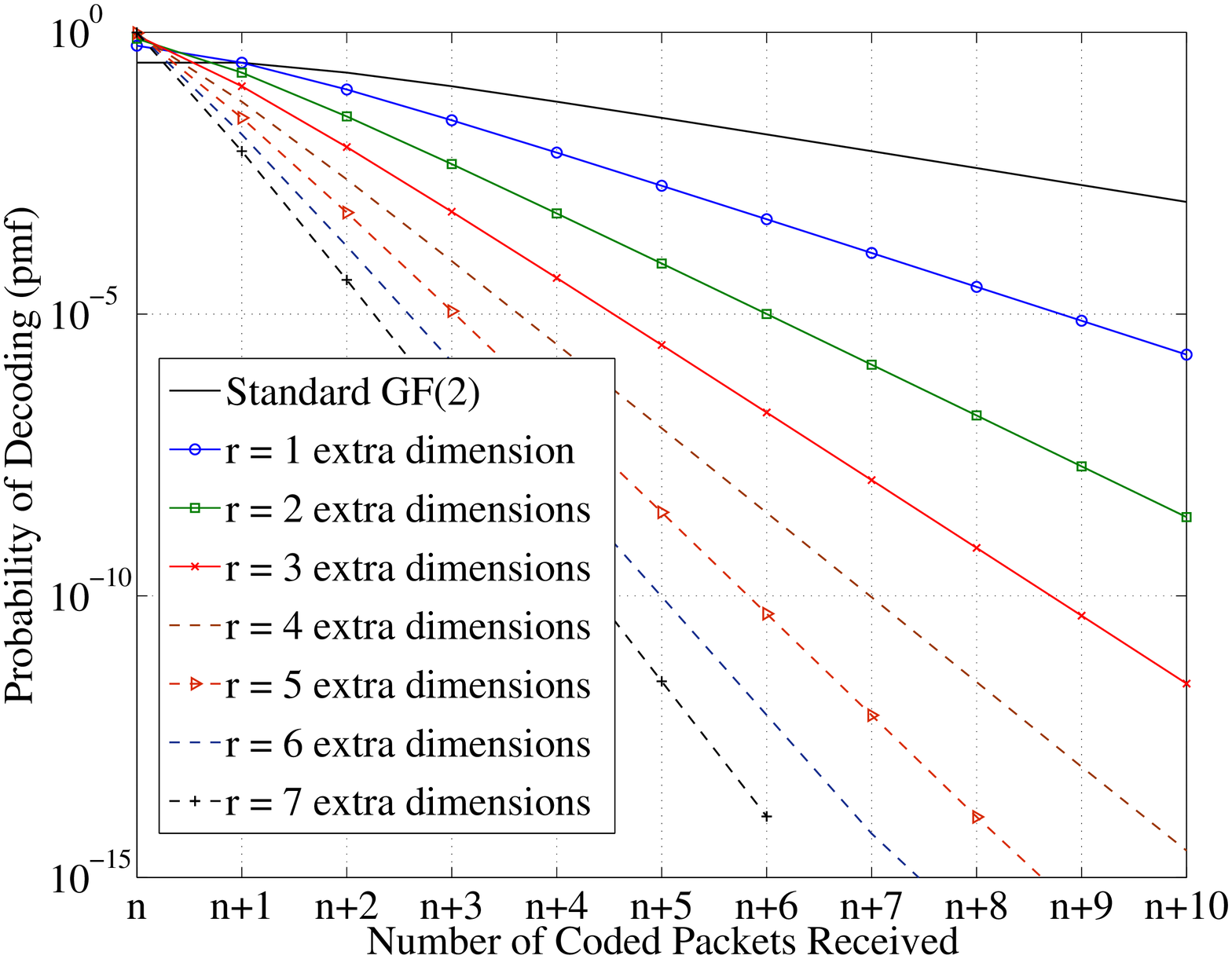}
      }}%
    \caption{Successful decoding for an outer or combined decoder
  given $r$ redundant dimensions: (a) CDF  and (b) PMF }
\vspace{-0.3cm}
\end{figure}

% \begin{figure}[t]
% \centering
% \includegraphics[width=0.5\textwidth]{fig/CDF.eps}
% \caption{CDF of Successful Decoding for an outer or a combined decoder
%   given different $r$ redundant dimensions.}
% \label{fig:CDF}
% \end{figure}

% \begin{figure}[t]
% \centering
% \includegraphics[width=0.5\textwidth]{}
% \caption{PMF of Successful Decoding for an outer or a combined decoder
%   given different $r$ redundant dimensions.}
% \label{fig:PMF}
% \end{figure}

Fig.~\ref{fig:CDF} shows the cumulative distribution function (CDF)
for a receiver with an outer or a combined decoder and with a network
operating with $GF(2)$ and various values of $r$. Clearly, introducing
additional dimensions improves the probability of decoding with fewer
received coded packets from the network. In fact, even for a small
value, e.g., $r = 1$ or $r = 2$, the improvement is quite
noticeable. This is good from a practical perspective as these gains
can be achieved without dramatically decreasing performance in the
receivers using inner decoders. Table~\ref{tab:dec_prob} provides key
decoding probabilities (in percentages) when receiving $n$, $n+1$,
$n+2$, and $n+3$ coded packets when the inner code is RLNC
$GF(2)$. Better results could be obtained if using a systematic
structure. The table shows that the probability of decoding after
receiving exactly $n$ coded packets using an outer or combined decoder
is quite high even for moderate $r$ values. It also shows that the
performance with $r =7$ is similar to that provided by RaptorQ
codes~\cite{RaptorQ}, while $r >7$ can provide higher decoding
guarantees.  The corresponding probability mass function (PMF) is
presented in Figure~\ref{fig:PMF}. This shows that the variance is
reduced with the increase of $r$ and that even a
$r > 4$ reduces the probability of transmitting more than $n$ coded
packets before decoding by at least an order of magnitude with respect
to RLNC over $GF(2)$.

\begin{table*}[t]
\scriptsize
\center
\begin{tabular}{lrllll}
\toprule
 \multicolumn{2}{c}{\textbf{Code}}  & \multicolumn{4}{ c }{\textbf{Decoding after receiving (coded packets) }} \\
  \cmidrule(r){1-2}  \cmidrule(l){3-6}
   \multirow{4}{*}{\textbf{Fulcrum}} & $\mathbf{r}$ & $\mathbf{n}$ & $\mathbf{n+1}$ & $\mathbf{n+2}$ & $\mathbf{n+3}$ \\
   \cmidrule(lr){2-2} \cmidrule(l){3-6}
   & 4   & 93.87\% & 99.75\% & 99.99\% & 99.9997\% \\
   & 7  & 99.22\% & 99.996\% & 99.99998 \% & 99.99999992\% \\
  & 10 & 99.90\% & 99.9999\% & 99.99999996\% & 99.99999999998\%\\
  \cmidrule(r){1-2} \cmidrule(l){3-6}
  \textbf{RaptorQ} &  & 99\% & 99.99\% & 99.9999\% & \\
  \bottomrule
\end{tabular}
\caption{Decoding after reception of a certain number of coded packets
  using the outer or combined decoders for various $r$ and
  assuming RLNC $GF(2)$ inner encoder and recoders.}
\label{tab:dec_prob}
\end{table*}

\subsection{Extension to Broadcast with Heterogeneous
  Receivers} \label{sec:broadcast-results} Let us consider the case of
broadcast from one source to two receivers ($R_1$ and $R_2$) with
independent channels and packet loss probability $e_i$ for receiver
$R_i$.  Our goal is to illustrate the effect of using different
decoders at receivers with heterogeneous channel qualities as well as
to compare the performance of Fulcrum to that of standard RLNC at
different finite fields.

We exploit the Markov chain model presented in~\cite{Nistor11} to
provide an accurate representation of the field size effect when
broadcasting to two receivers. This model is also easily adapted to
incorporate the use of the outer decoding capabilities of Fulcrum.
The model in~\cite{Nistor11} relies on a state definition that
incorporates three variables, the number of independent linear
combinations at each receiver and the common linear combinations
between the two. The key change in the model is similar to the change
introduced in the Markov chain in Section~\ref{sec:analys-rx-perf},
that is, considering that the dimensions in the Markov chain
in~\cite{Nistor11} have a higher number of possible values, namely,
$n+r + 1$ instead of $n + 1$ per variable in the state. Then, if one
(or both) receivers use the outer (or combined) decoder, the number of
linear combinations gathered by that receiver will be increased to
$n+r$ whenever the receiver would normally achieve $n$.  Using these
modifications, we generated the results for
Figure~\ref{fig:CDF_Broadcast}.

% \begin{figure}[t]
% \centering
% \includegraphics[width=0.5\textwidth]{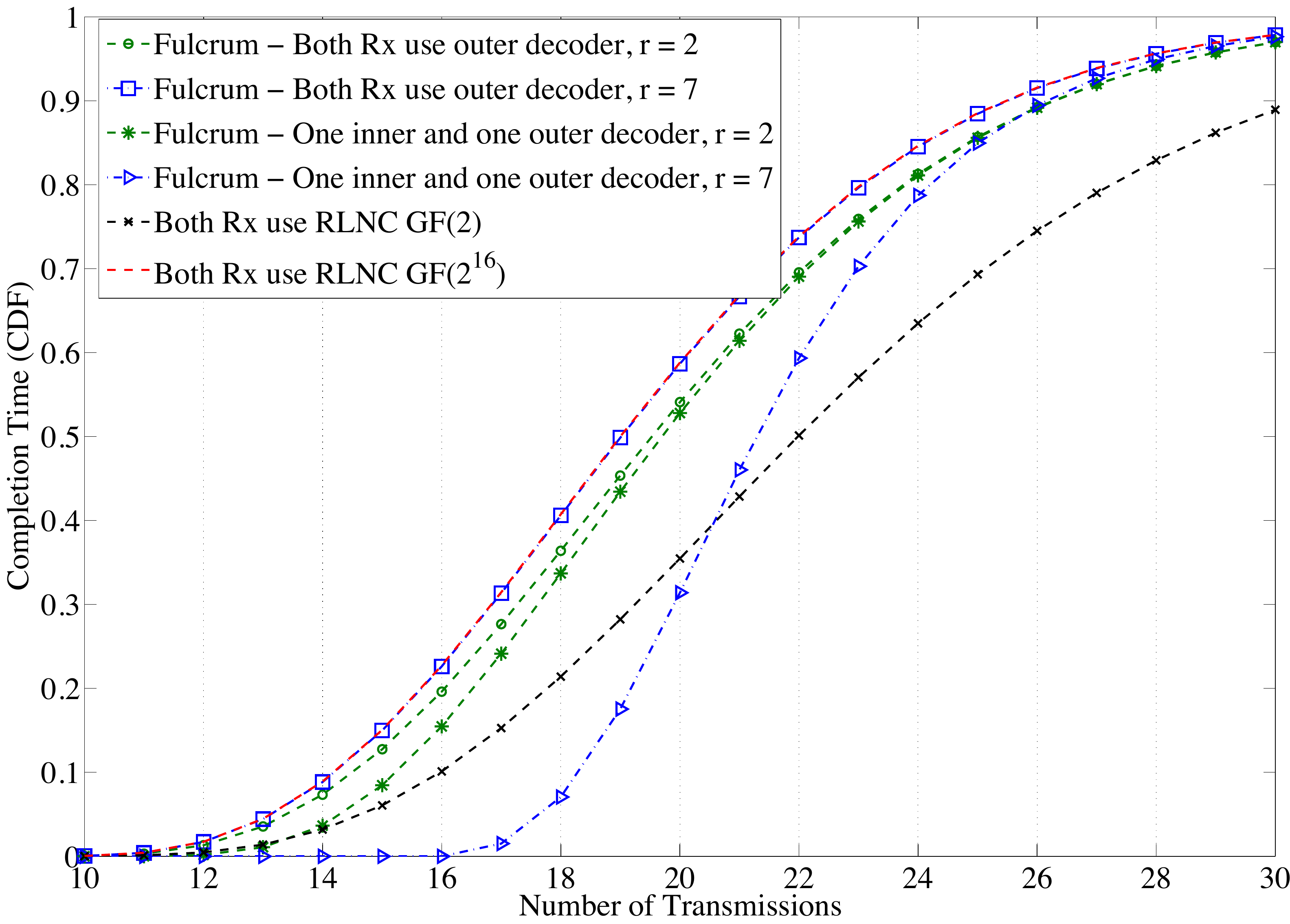}
% \caption{CDF of successful decoding for broadcast channel with two
%   receivers showing the performance of standard RLNC for $GF(2)$ and
%   $GF(2^{16})$, the performance of Fulcrum when one receiver exploits
%   the outer code and the other only the inner code, and Fulcrum when
%   both receivers exploit the outer code. Parameters $e_1 = 0.1$, $e_2
%   = 0.5$, $n = 10$ packets.}
% \label{fig:CDF_Broadcast}
% \end{figure}

\begin{figure}
  \centering
\subfloat[ \label{fig:CDF_Broadcast}]{{\includegraphics[width=0.45\linewidth]{fig/g_10_e1_01_e2_05.pdf}
 }}%
\qquad
  % \subfloat[\label{fig:PMF_Broadcast_2} ]{{\includegraphics[width=0.45\linewidth]{}
%      }}%
\subfloat[ \label{fig:PMF_Broadcast}]{{\includegraphics[width=0.45\linewidth]{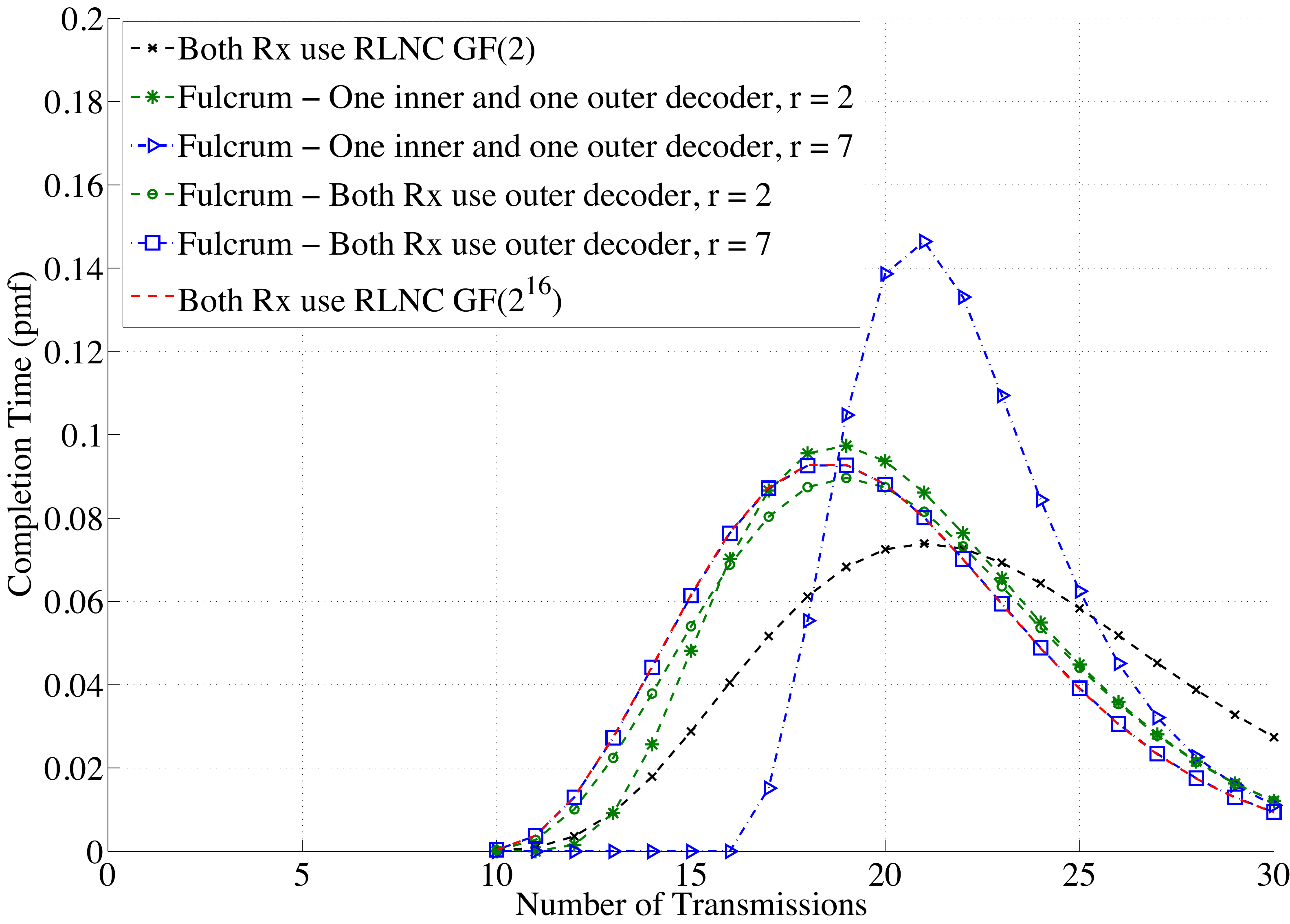}
 }}%
\caption{(a) CDF and (b) PMF of successful Decoding for Broadcast
  channel with two receivers showing the performance of standard RLNC
  for $GF(2)$ and $GF(2^{16})$, the performance of Fulcrum when one
  receiver exploits the outer code and the other only the inner code,
  and Fulcrum when both receivers exploit the outer code. Parameters
  $e_1 = 0.1$, $e_2 = 0.5$, $n = 10$ packets. }
\vspace{-0.3cm}
\end{figure}

Figure~\ref{fig:CDF_Broadcast} shows the CDF for the number of
transmissions to complete $n=10$ packets to two receivers. The
performance of RLNC $GF(2^{16})$ shows the best performance of the
best solution. Clearly, a Fulcrum approach where both receivers
exploit the outer decoder and use $r= 7$ redundant packets performs
essentially the same as RLNC with $GF(2^{16})$.  Note that Fulcrum
with $r=2$ and two receivers with outer decoder provides close to the
same performance as RLNC with $GF(2^{16})$ but with less
complexity. Additionally, $r = 2$ provides a much better trade--off
for the two receivers using different decoders.  Also, interesting is
that even when one of the receivers attempts to decode using the inner
decoder, i.e., using only $GF(2)$ operations, the worst case behavior
is maintained. This worst case behavior is superior to using only RLNC
with $GF(2)$.  Figure~\ref{fig:PMF_Broadcast} provides the PMF for a
similar scenario, demonstrating that the variance using Fulcrum is
reduced, particularly when increasing $r$. In fact, we observe that
for the case of two receivers using Fulcrum's outer (or combined)
decoder achieve essentially the same PMF as RLNC with $GF(2^{16})$.

\subsection{Overhead}
Let us define the \textit{overhead} for a generation with size $n$ as
the number of additional bits transmitted to successfully deliver a
generation of packets including coding coefficients and linearly
dependent packets. This means that it incorporates both the additional
header information and the overhead caused by retransmissions due to
linear dependency. For our analysis, we consider receivers with outer
and combined decoders and that we use the standard coding vector
representation, i.e., a coefficient per packet is sent attached to the
coded packet.

The mean overhead of using $GF(2^h)$ is proportional to $h n^2$~bits
for large enough $h$ considering that there are channel losses that
affect the total number of packets transmitted. The mean overhead of
our scheme is proportional to $E\left[ N_{GF(2)}\right] ( n + r ) \leq
n^2 + nr + (n+r) \Big( 2^{-r} - 2^{-n-r} \Big)$ if we use a standard
coding vector representation. Since $r << n$ for these cases, the
overhead will be dominated by $n^2 + n 2^{-r}$, which is to say almost
a factor of $h$ smaller than using $GF(2^h)$. However, this can be
further reduced if we use a sparse $GF(2)$ inner code.

\vspace{-0.2cm}
\section{Effect of Sparse Inner Codes} \label{sec:sparse-inner-codes}
Beyond benefiting from sparse inner codes for reducing overhead,
Fulcrum can provide fundamental benefits to sparse coding strategies
as well.
%To illustrate this potential, let us consider an example of
%the coupon collector problem as an extreme case with application to
%P2P systems. 
We consider the general case of static sparse structures, i.e., sparse
inner codes that do not change in time. Results can be extended to
time-variant schemes such as tunable sparse network
coding~\cite{Feizi12}.

For the general case of sparse matrices, we use the following bound.
\begin{thm}\label{thm:p-to-rho}
  (Theorem 1 of~\cite{TheoremProof1} ) The probability $p(i,n)$ of a
  received coded packet with density $\rho(i,n) \leq 1/2$ to be
  innovative when the receiver already has $i$ out of $n$ degrees of
  freedom is 
%\begin{eqnarray}
$p(i,n)\geq 1-(1-\rho(i,n))^{n-i}.
$%\end{eqnarray}
\end{thm}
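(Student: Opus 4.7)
The plan is to upper bound $\Pr[v \in W]$ by $(1-\rho(i,n))^{n-i}$, where $W \subseteq \FF_2^n$ is the $i$-dimensional subspace spanned by the receiver's current coding vectors and $v$ is the incoming coded packet's coefficient vector; innovativeness is the complementary event, so the claim will follow by subtracting from one.

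First I would fix a basis of $W$ in reduced row echelon form and let $P \subseteq \{1,\ldots,n\}$ denote the $i$ pivot positions and $N = \{1,\ldots,n\} \setminus P$ the $n-i$ non-pivot positions. Gaussian elimination then turns the membership condition $v \in W$ into $n-i$ coordinate constraints: subtracting the appropriate pivot basis vectors from $v$ to clear the entries in $P$ produces a residual $v'$ supported on $N$, and $v \in W$ holds iff $v'_j = 0$ for every $j \in N$. Each such constraint reads $v_j = \sum_{k : p_k \in P} v_{p_k}(w_k)_j =: s_j$, i.e.\ it forces $v_j$ to match a specific bit $s_j$ that is determined entirely by $v|_P$.

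Next I would condition on $v|_P$. Under the sparse coding model the entries of $v$ are independent Bernoulli$(\rho)$ variables with $\rho = \rho(i,n)$, so conditionally the family $\{v_j\}_{j \in N}$ is still independent Bernoulli$(\rho)$. For each $j \in N$, the conditional probability that $v_j = s_j$ equals either $\rho$ (if $s_j=1$) or $1-\rho$ (if $s_j=0$); the hypothesis $\rho \leq 1/2$ bounds both by $1-\rho$. Independence across $N$ then gives $\Pr[v \in W \mid v|_P] \leq (1-\rho)^{n-i}$, and since this bound does not depend on $v|_P$, taking expectation preserves it and yields the claim.

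The main obstacle, and the reason the density hypothesis $\rho \leq 1/2$ is needed, is precisely the asymmetric behavior of the per-coordinate constraint: without this bound, each factor would only admit the weaker estimate $\max(\rho,1-\rho)$, which could exceed $1-\rho$ whenever $s_j$ happens to coincide with the majority outcome, breaking the clean factorization over $N$. A secondary subtlety is the implicit coding model; the argument above is cleanest for independent Bernoulli coefficients, whereas a fixed-weight sparse model (exactly $\rho n$ ones per vector) introduces weak negative correlations among the coordinates and would require a combinatorial refinement of the independence step to obtain the same right-hand side.
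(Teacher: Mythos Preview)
The paper does not supply its own proof of this statement: Theorem~\ref{thm:p-to-rho} is quoted verbatim as ``Theorem~1 of~\cite{TheoremProof1}'' and is used only as a black box in the proof of Lemma~\ref{lem:SparseNumberOfNonZero}. There is therefore nothing in the paper to compare your argument against.

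That said, your proof sketch is sound for the i.i.d.\ Bernoulli$(\rho)$ coding model. The reduction via reduced row echelon form to $n-i$ coordinate constraints on the non-pivot positions is exactly right: with pivots $P$ and non-pivots $N$, membership $v\in W$ is equivalent to $v_j=\sum_k v_{p_k}(w_k)_j$ for every $j\in N$, and conditioning on $v|_P$ leaves the $\{v_j\}_{j\in N}$ independent Bernoulli$(\rho)$, each matching its target bit with probability at most $1-\rho$ once $\rho\le 1/2$. Multiplying over $|N|=n-i$ and averaging out the conditioning gives the bound. Your caveat about the fixed-weight model is well taken; the cited result is typically stated and proved under the independent-coefficient model, which is also how the present paper uses it.
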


If the number of non-zero coefficients is given by $k$, then the use
of Fulcrum provides a $\rho =
k/(n+r)$. Lemma~\ref{lem:SparseNumberOfNonZero} provides an upper
bound for the general case of sparse inner codes.

\begin{lem} \label{lem:SparseNumberOfNonZero}
Considering the problem of a sparse inner code with an MDS outer code,
we have
%\begin{equation}
$E[S(n,r)] \leq n + n \left(\exp \left (  \frac{k (r+1)}{n+r} \right) - 1\right).
$%\end{equation}
\end{lem}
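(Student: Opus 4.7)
My plan is to decompose $S(n,r)$ into a sum of geometric waiting times and then apply Theorem~\ref{thm:p-to-rho} pointwise. Because the outer (and combined) decoder with an MDS outer code succeeds as soon as $n$ linearly independent inner-code packets have been gathered from the ambient $(n+r)$-dimensional space, I would write $S(n,r) = \sum_{i=1}^{n} X_i$, where $X_i$ is geometric with success probability $p(i-1, n+r)$. Independence of the $X_i$ (strong Markov property) gives $E[S(n,r)] = \sum_{i=0}^{n-1} 1/p(i, n+r)$, and applying the theorem with $\rho = k/(n+r) \le 1/2$ yields $p(i,n+r) \ge 1 - (1-\rho)^{n+r-i}$.

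Next, since $(1-\rho)^m$ is decreasing in $m$ and $n+r-i \ge r+1$ for all $i \in \{0,\ldots,n-1\}$, every summand is bounded above by its worst-case value at $i = n-1$, giving the uniform estimate $E[S(n,r)] \le n/(1-(1-\rho)^{r+1})$. Applying the standard inequality $(1-x)^m \le e^{-mx}$ then converts the denominator to $1 - \exp(-k(r+1)/(n+r))$, and a final use of $1/(1-e^{-x}) \le e^x$ combined with the algebraic identity $n e^x = n + n(e^x - 1)$ produces the claimed bound.

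The main obstacle is this last step: the inequality $1/(1-e^{-x}) \le e^x$ is equivalent to $e^x \ge 2$, and so is only valid when $k(r+1)/(n+r) \ge \ln 2$. Outside that regime a refined argument is required; one natural alternative is to bound the excess directly via $E[S(n,r)]-n = \sum_{i=0}^{n-1}(1-p(i,n+r))/p(i,n+r)$, sum the geometric series in $(1-\rho)^{n+r-i}$ explicitly to obtain $\sum_i(1-\rho)^{n+r-i} \le (1-\rho)^{r+1}/\rho$, and then use a binomial-type estimate such as $1-(1-\rho)^{r+1} \ge (r+1)\rho(1-(r+1)\rho/2)$ to package the remaining fraction into the exponential form. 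I expect the probabilistic setup to be routine, and the final algebraic conversion into exactly the expression $n + n(\exp(k(r+1)/(n+r))-1)$ to be the most delicate portion of the proof.
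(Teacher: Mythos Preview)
Your setup---geometric waiting times, the pointwise bound from Theorem~\ref{thm:p-to-rho}, then the uniform worst-case estimate at $i=n-1$ combined with $(1-\rho)^m\le e^{-m\rho}$---is exactly the paper's argument (the paper applies the exponential inequality termwise before taking the worst case, but the order is immaterial). Both routes arrive at
\[
E[S(n,r)] \;\le\; \frac{n}{1-e^{-x}} \;=\; n\,\frac{e^{x}}{e^{x}-1} \;=\; n + \frac{n}{e^{x}-1},\qquad x=\frac{k(r+1)}{n+r},
\]
and the paper's proof \emph{stops here}. The printed expression $n + n\bigl(e^{x}-1\bigr)$ in the lemma statement is a typo for $n + n/\bigl(e^{x}-1\bigr)$: the sentence immediately following the proof takes $r\to\infty$ (so $x\to k$) and records the limit as $n\,e^{k}/(e^{k}-1)=n\bigl(1+1/(e^{k}-1)\bigr)$, which is the $1/(e^{x}-1)$ form, not $n e^{k}$.

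So your diagnosis of the obstruction is correct, but the conclusion to draw is different: the inequality $1/(1-e^{-x})\le e^{x}$ is not part of the intended argument at all, and the ``refined argument'' you sketch is unnecessary. Replace the target by $n + n\big/\bigl(\exp\!\bigl(k(r+1)/(n+r)\bigr)-1\bigr)$ and your proof is complete and coincides with the paper's.
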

\begin{proof}
\begin{align}
E[S(n,r)] & \overset{(1)}{\leq} \sum_{i = 0}^{n-1}
\frac{1}{1-(1-\rho)^{n+r-i}} %\\&
= \sum_{i = r+1}^{n+r} \frac{1}{1 - \left( 1-k/(n+r)\right)^{i}}\\
&\overset{(2)}{\leq}  \sum_{i = r+1}^{n+r} \frac{1}{1 - \exp\left ( - \frac{ik}{n+r} \right)}
\end{align}
where step $(1)$ uses the bound in Theorem~\ref{thm:p-to-rho}, $(2)$
uses the fact that $\left( 1 - k/(n+r)\right)^i \leq \exp\left ( -
  \frac{ik}{n+r} \right)$.
Let us consider $q = \exp \left (
  \frac{k}{n+r} \right)$, then
\begin{align}
E[S(n,r)]&\leq \sum_{i=r+1}^{n+r} \frac{1}{1 - q^{-i}}%\\& 
\overset{(3)}{\leq} n \frac{q^{r+1}}{q^{r+1} -1}%\\&
= n \frac{\exp \left (  \frac{k (r+1)}{n+r} \right)}{\exp \left (
    \frac{k (r+1)}{n+r} \right) - 1} %\\&
%= n \left( 1 + \frac{1}{\exp \left (  \frac{k (r+1)}{n+r} \right) - 1} \right)
\end{align}
where $(3)$ uses the fact that $q^{-i} \leq q^{-r-1}, q > 1, i \geq
r+1$, which concludes the proof.
\end{proof}

A first conclusion, is that
%\begin{equation}
$\lim_{r\rightarrow \infty} E[S(n,r)] \leq  n \frac{e^k}{e^k- 1} = n \left( 1  + \frac{1}{e^k- 1}\right).
$%\end{equation}
% \subsubsection{Expression in Terms of the Number of Non-zero
%   coefficients}
% If the number of non-zero coefficients is given by $k$, then the use
% of Fulcrum provides a $\rho = k/(n+r)$.  $k$ could also be a function
% of $n+r$ but these cases can be studied based on the current result.
% \begin{align}
% E[S(n,r)] & \overset{(1)}{\leq} \sum_{i = 0}^{n-1}
% \frac{1}{1-(1-\rho)^{n+r-i}} %\\&
% = \sum_{i = r+1}^{n+r} \frac{1}{1 - \left( 1-k/(n+r)\right)^{i}}\\
% &\overset{(2)}{\leq}  \sum_{i = r+1}^{n+r} \frac{1}{1 - \exp\left ( - \frac{ik}{n+r} \right)}
% \end{align}
% where step $(1)$ uses the bound in Theorem~\ref{thm:p-to-rho}, $(2)$
% uses the fact that $\left( 1 - k/(n+r)\right)^i \leq \exp\left ( -
%   \frac{ik}{n+r} \right)$.
% Let us consider $q = \exp \left (
%   \frac{k}{n+r} \right)$
% \begin{align}
% E[S(n,r)]&\leq \sum_{i=r+1}^{n+r} \frac{1}{1 - q^{-i}}%\\& 
% \overset{(3)}{\leq} n \frac{q^{r+1}}{q^{r+1} -1}%\\&
% = n \frac{\exp \left (  \frac{k (r+1)}{n+r} \right)}{\exp \left (
%     \frac{k (r+1)}{n+r} \right) - 1} %\\&
% = n \left( 1 + \frac{1}{\exp \left (  \frac{k (r+1)}{n+r} \right) - 1} \right)
% \end{align}
% where $(3)$ uses the fact that $q^{-i} \leq q^{-r-1}, q > 1, i \geq
% r+1$. Then,
% \begin{equation}
% \lim_{r\rightarrow \infty} E[S(n,r)] \leq  n \frac{e^k}{e^k- 1} = n \left( 1  + \frac{1}{e^k- 1}\right)
% \end{equation}
This means that adding additional redundancy in the outer code leads
to a bounded performance in terms of overhead. If $k = 3$ or $k=4$ for
a large outer code redundancy, the mean number of coded packets to be
received in order to decode is below $1.05 n$ and below $1.019 n$,
respectively. That is, less than $5$\% and $2$\% overhead.  If we
consider LT codes using an ideal soliton distribution, the mean number
of non-zeros is given by $k \approx \ln (n+r)$~\cite{MacKay02}, and $r
= \alpha n$ and $\alpha>0$, then
%\begin{align}
$E[S(n,r)] \leq  n + O\left ( \frac{n^{1/ (\alpha+1)} }{ (1+\alpha)^{ \alpha/(\alpha + 1)} } \right)
$. %\end{align}
 As an example, if $\alpha = 1$, then $E[S(n,r)] \leq n \left(1 +
  \frac{1}{ \sqrt{2n} - 1} \right)$ while $\alpha = 2$ produces
$E[S(n,r)] \leq n \left(1 + \frac{1}{ (3n)^{2/3} - 1} \right)$.  If
$\alpha \rightarrow \infty$, the overhead vanishes.

For the special case of fixed $\rho$ irrespective of the number of
coding symbols for generated in the outer code, i.e., $\rho(i,n) =
\rho$, then
%\subsubsection{Special Case of Fixed $\rho$}
%Let us consider a slightly different case where the density is static,
% i.e., $\rho(i,n) = \rho$ irrespective of the number of coding symbols
% for the inner code. 
%% if $S(n,r)$ represents the number of received
%% packets until decoding is possible, 
\begin{align}
E[S(n,r)] &\overset{(1)}{\leq} \sum_{i = r+1}^{n+r} \frac{1}{1 - \left( 1/ (1-\rho)\right)^{-i}}
 \overset{(2)}{=}  \sum_{i = r+1}^{n+r} \frac{1}{-B^{-i}} =  \sum_{i = r+1}^{n+r} \frac{B^i}{B^i -1}\\
& \overset{(3)}{\leq} n +  \sum_{i = r+1}^{n+r} \frac{1}{B^{i-1}
  (B-1)} = n + \frac{B}{B-1}  \sum_{i = r+1}^{n+r}  B^{-i}\\
&= n + \frac{B^{n+1} -B}{\left( B-1\right)^2 B^{n+r}} = n + \frac{(1-\rho)^{r+1} }{ \rho^2} - \frac{(1-\rho)^{n+ r+1} }{ \rho^2},
\end{align}
where step $(1)$ uses the bound in Theorem~\ref{thm:p-to-rho}, $(2)$
considers making $B = \frac{1}{1-\rho}$, and $(3)$ uses $B^i \geq B^i
- B^{i-1}$ for $B \geq 1$.  For $\rho$ fixed, then $r \rightarrow
\infty$ then $E[S(n,r)]$ tends to $n$.

\vspace{-0.25cm}
\section{Implementation}\label{sec:implementation}
In this section, we describe the implementation of different Fulcrum
encoder and decoder variants. The descriptions presented here are
based on our actual implementation of the algorithms in the Kodo
network coding library~\cite{kodogit}. For our initial implementation,
we utilized two RLNC codes with the outer code operating in $GF(2^8)$
or $GF(2^{16})$ and the inner code in $GF(2)$.

\subsection{Implementation of the Encoder}
One advantage of Fulcrum is that the encoding is quite
simple. Essentially, the two encoders can be
implemented independently, where the outer encoder uses the $n$ original
source symbols to produce $n+r$ input symbols for the inner
encoder. In general, the inner encoder can be oblivious to the fact
that the input symbols might contain already encoded data.

For the initial implementation we required all source symbols to be
available before any encoding could take place. This is however not
necessary in cases where both encoders support systematic encoding. In
such cases, it would be possible to push the initial $n$ symbols
directly through both encoders without doing any coding operations or
adding additional delay.  An illustration of this is shown in
Fig.~\ref{fig:systematic_outer_systematic_inner}, where a set of $n =
8$ original symbols are sent with the outer encoder configured to
build an expansion of $r = 2$.

\begin{figure}
  \centering
\subfloat[ \label{fig:systematic_outer_systematic_inner}]{{\includegraphics[width=0.35\linewidth]{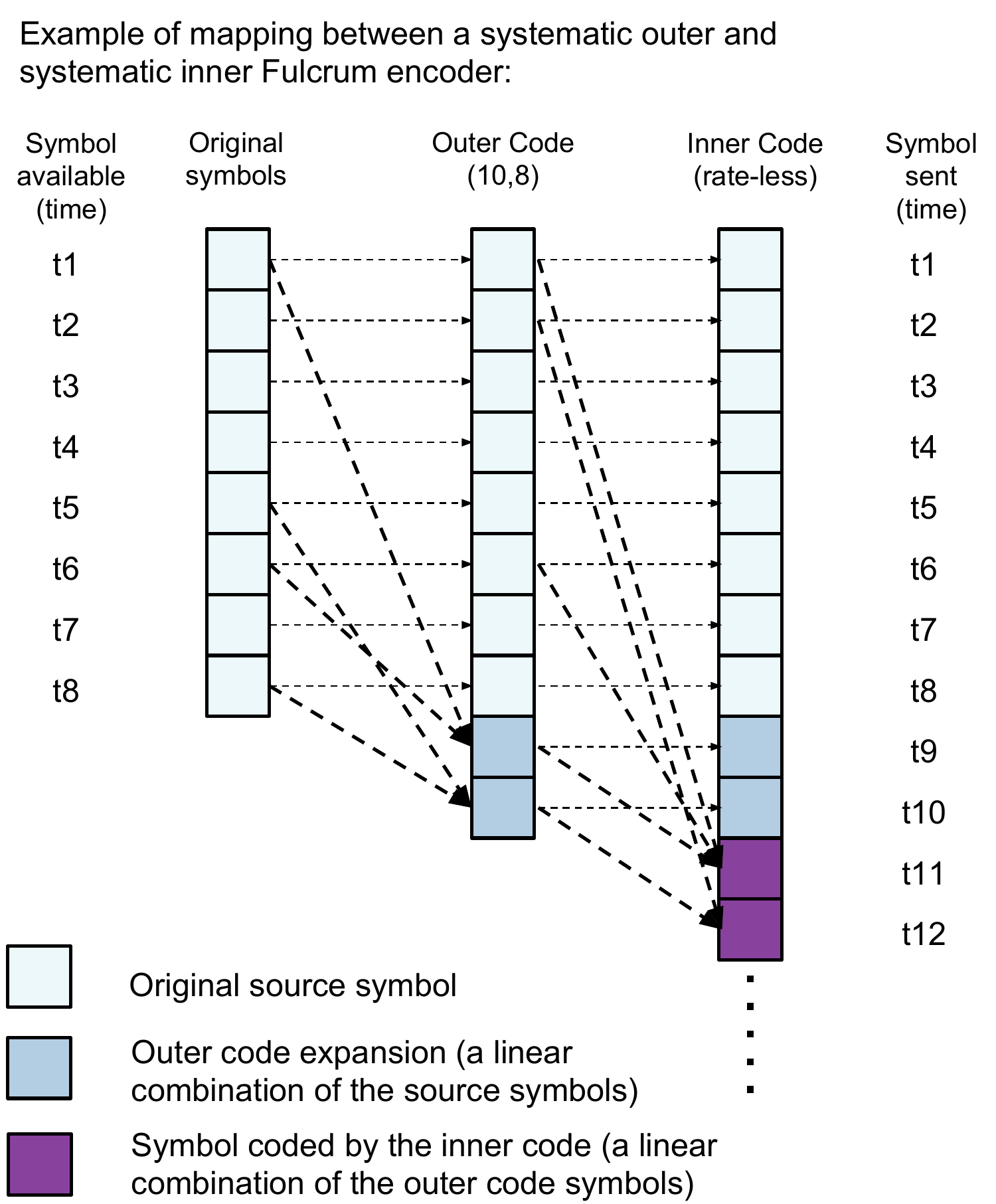}
 }}%
\qquad
    \subfloat[\label{fig:non_systematic_outer_non_systematic_inner} ]{{\includegraphics[width=0.35\linewidth]{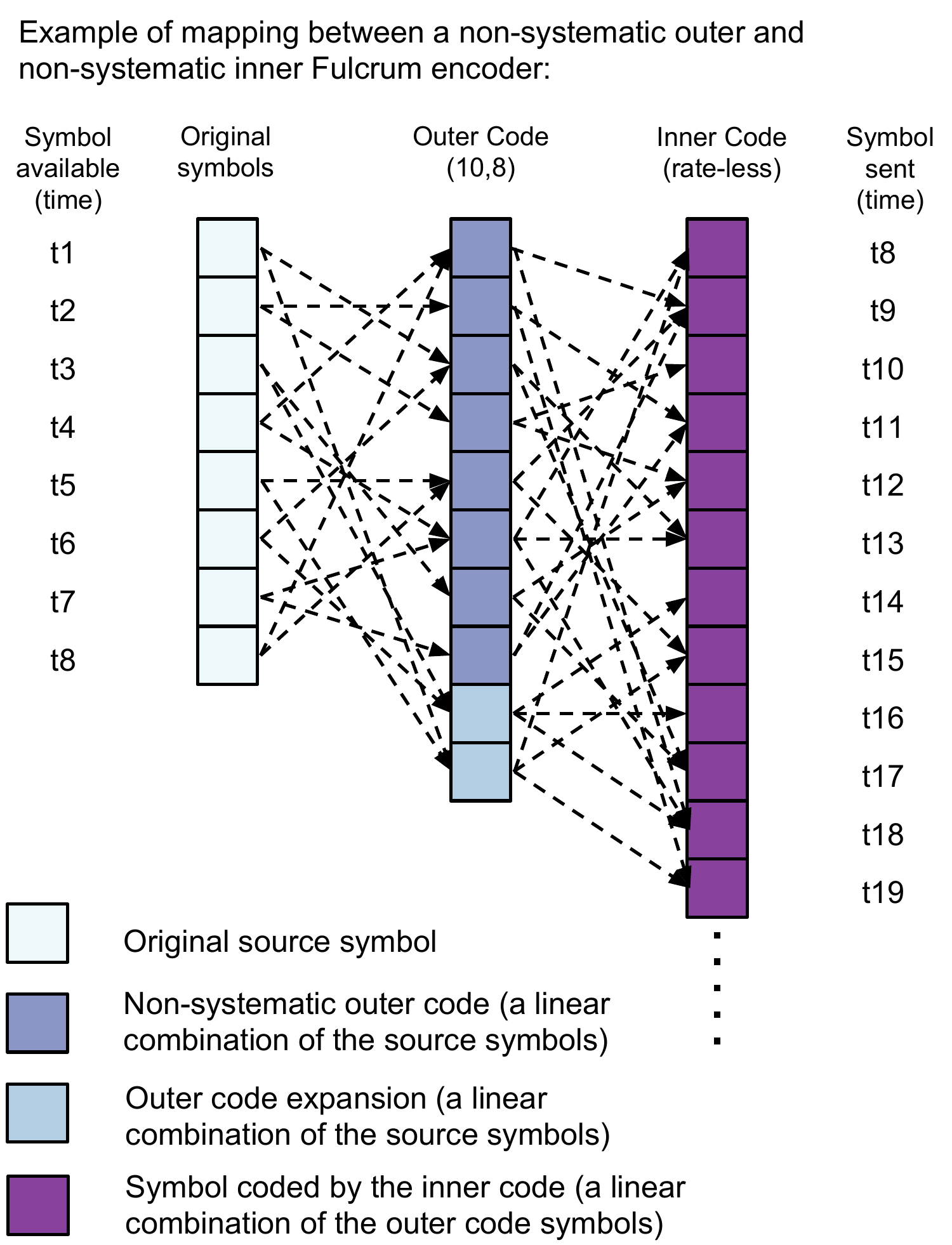}
      }}%
    \caption{Example of (a) a systematic outer encoder and a
      systematic inner encoder with $n=8$ and $r=2$, and (b) a
      non-systematic outer encoder and a non-systematic inner encoder
      with $n=8$ and $r=2$ }
\vspace{-0.3cm}\end{figure}

As both encoders in Fig.~\ref{fig:systematic_outer_systematic_inner}
are systematic, no coding takes place until step $9$ and $10$, where
the outer encoder produces the first encoded symbols. At this point,
the inner encoder is still in the systematic phase and therefore
passes the two symbols directly through to the network. In step $11$,
the inner encoder also exits the systematic phase and starts to
produce encoded symbols. At this stage, the inner encoder is fully
initialized and no additional symbols are needed from the outer
encoder, all following encoding operations therefore take place in the
inner encoder.  

As shown in this simple example, using a systematic structure in both
encoders can be very beneficial for low delay applications because
packets can be sent as they arrive at the encoder. Systematic encoding
is not always required for attaining this low delay.
%Fig.~\ref{fig:systematic_outer_non_systematic_inner} shows that
%low delay can also be achieved even in cases where both encoders do
%not operate in a systematic mode. In this case, 
For example, if the inner encoder is a standard RLNC encoder only
generating non-zero coefficients for the available symbols, i.e.,
using an on-the-fly encoding mechanism.

In the case of a non-systematic inner code, this low delay
performance is typically not possible. However, there are several
applications where non-systematic encoding may be more beneficial,
e.g., for security, multiple-source and/or multiple-hop networks.  For
data confidentiality, using a systematic outer code becomes a weakness
in the system. In this case, a dense, high field outer code is key to
providing the required confidentiality.

As an example,
Fig.~\ref{fig:non_systematic_outer_non_systematic_inner} shows the use
of a non-systematic outer encoder. Assuming the outer mapping is kept
secret, only nodes with knowledge of the secret would be able to
decode the actual content. Whereas all other nodes would still be able
to operate on the inner code.  For multi-hop networks or multi-source
networks, a systematic inner code may not be particularly useful for
the receiver as the systematic structure will not be preserved as the
packet traverses the network and is
recoded. Fig. ~\ref{fig:non_systematic_outer_non_systematic_inner}
shows that it is also possible to use a non-systematic encoding scheme
at the inner encoder. This is typically implemented to minimize the
risk of transmitting linear dependent information in networks which
may contain multiple sources for the same data, e.g. in Peer-to-Peer
systems, or if the state of the sinks is unknown.

\subsection{Implementation of the Decoder}
We have developed decoders supporting
all three types of receivers mentioned in
Section~\ref{sec:description}.
%The details of the implementation of
%these decoders is presented in the following. 

%\begin{itemize}
%\item 
\textbf{Outer decoder:} immediately maps from the inner to the outer
code essentially decoding in $GF(2^h)$. This type of decoder is shown
in Fig.~\ref{fig:outer_decoder_detail}. In order to perform this
mapping, a small lookup table storing the coefficients was used. The
size of the lookup table depends on whether the outer encoder is
systematic or not. In the case of a systematic outer encoder a lookup
table of size $r$ is sufficient, since the initial $n$ symbols are
uncoded (i.e. using the unit vector). However, in case of a
non-systematic outer encoder all $n+r$ outer encoding vectors needs to
be stored. An alternative approach would be to use a pseudo-random
number generator to generate the encoding vectors on the fly as
needed. One advantage of the lookup table is that it may be precomputed
and therefore would not consume any additional computational resources
during encoding/decoding.

%\item 
\textbf{Inner decoder:} decodes using only $GF(2)$ operations, requiring
  a systematic outer encoder (Fig.~\ref{fig:inner_decoder_detail}). In
  this case, the decoder's implementation is very similar to a
  standard RLNC $GF(2)$ decoder configured to received $n+r$
  symbols. The only difference being that only $n$ of the decoded
  symbols will contain the original encoded data. If sparse inner
  codes are used, other decoding algorithms could be used, e.g.,
  belief propagation~\cite{Shokrollahi06}.

%\item 
\textbf{Combined decoder:} attempts to decode as much as
  possible using the inner decoder before switching to the typically
  more computationally costly outer decoder. Note that this type of
  decoding only is beneficial if the outer encoder is systematic or,
  potentially, very sparse. Otherwise, the combined decoder gives no
  advantages in general.
%\end{itemize}
\begin{figure}[t]
\centering
\includegraphics[width=0.35\textwidth]{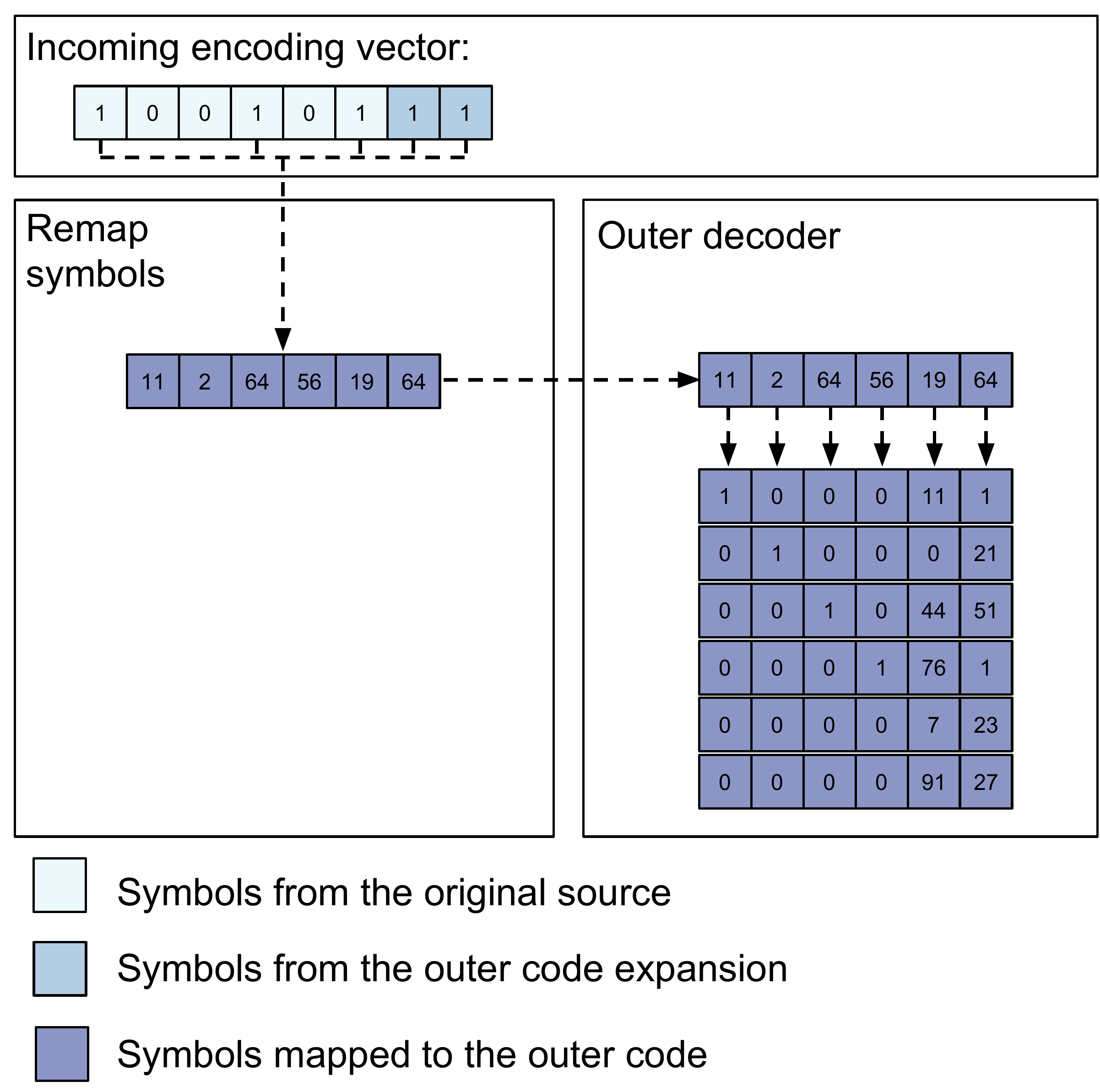}
\caption{Decoding process in a Fulcrum outer decoder. The inner code
  encoding vector is mapped directly back to the outer field. Then, a
  standard decoder can be used to decode the
  data.}
\label{fig:outer_decoder_detail}
\vspace{-0.3cm}
\end{figure}

\begin{figure}
  \centering
\subfloat[ \label{fig:inner_decoder_detail}]{{\includegraphics[width=0.35\linewidth]{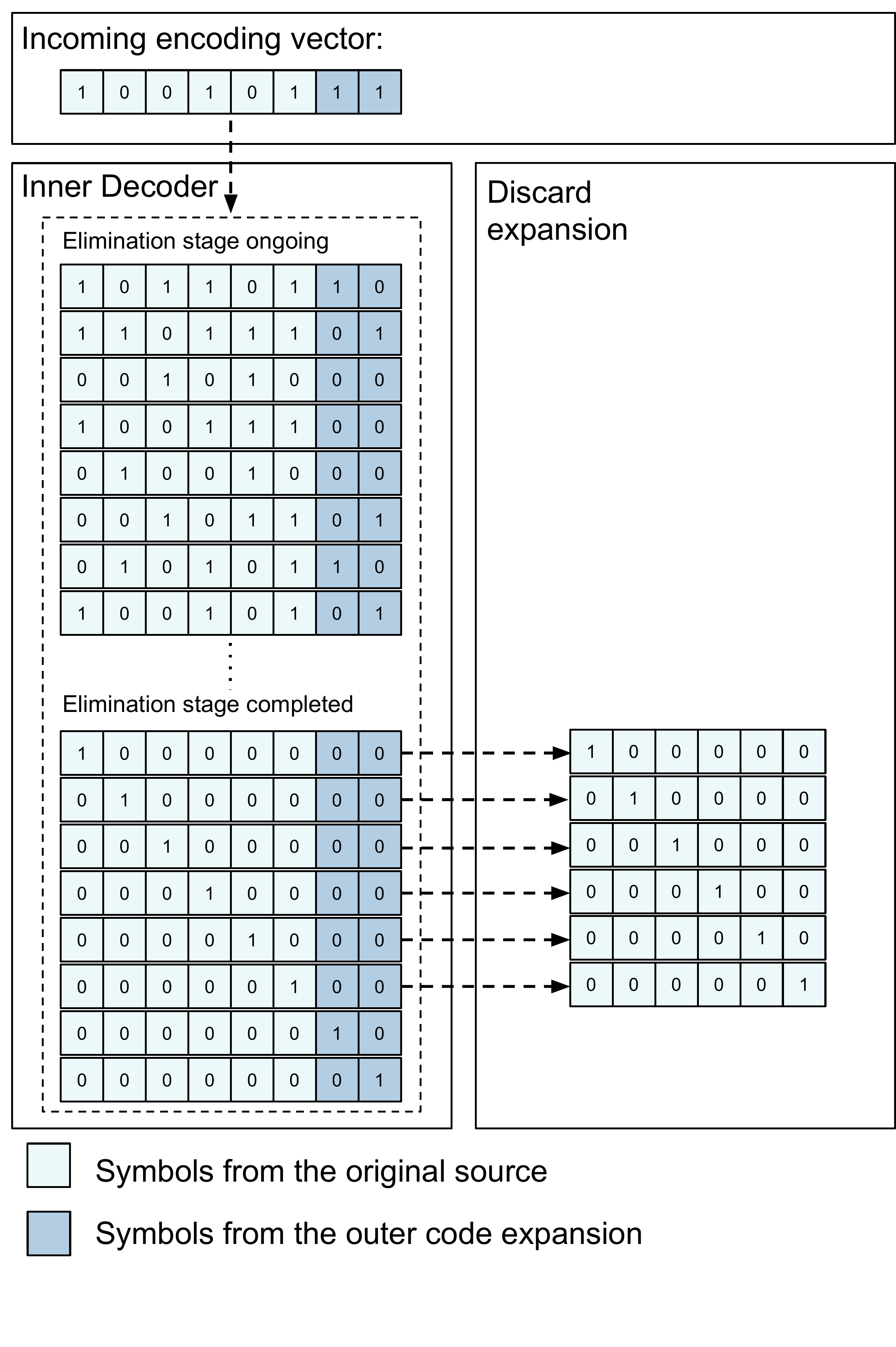}
 }}%
\qquad
    \subfloat[\label{fig:combined_decoder_detail} ]{{\includegraphics[width=0.35\linewidth]{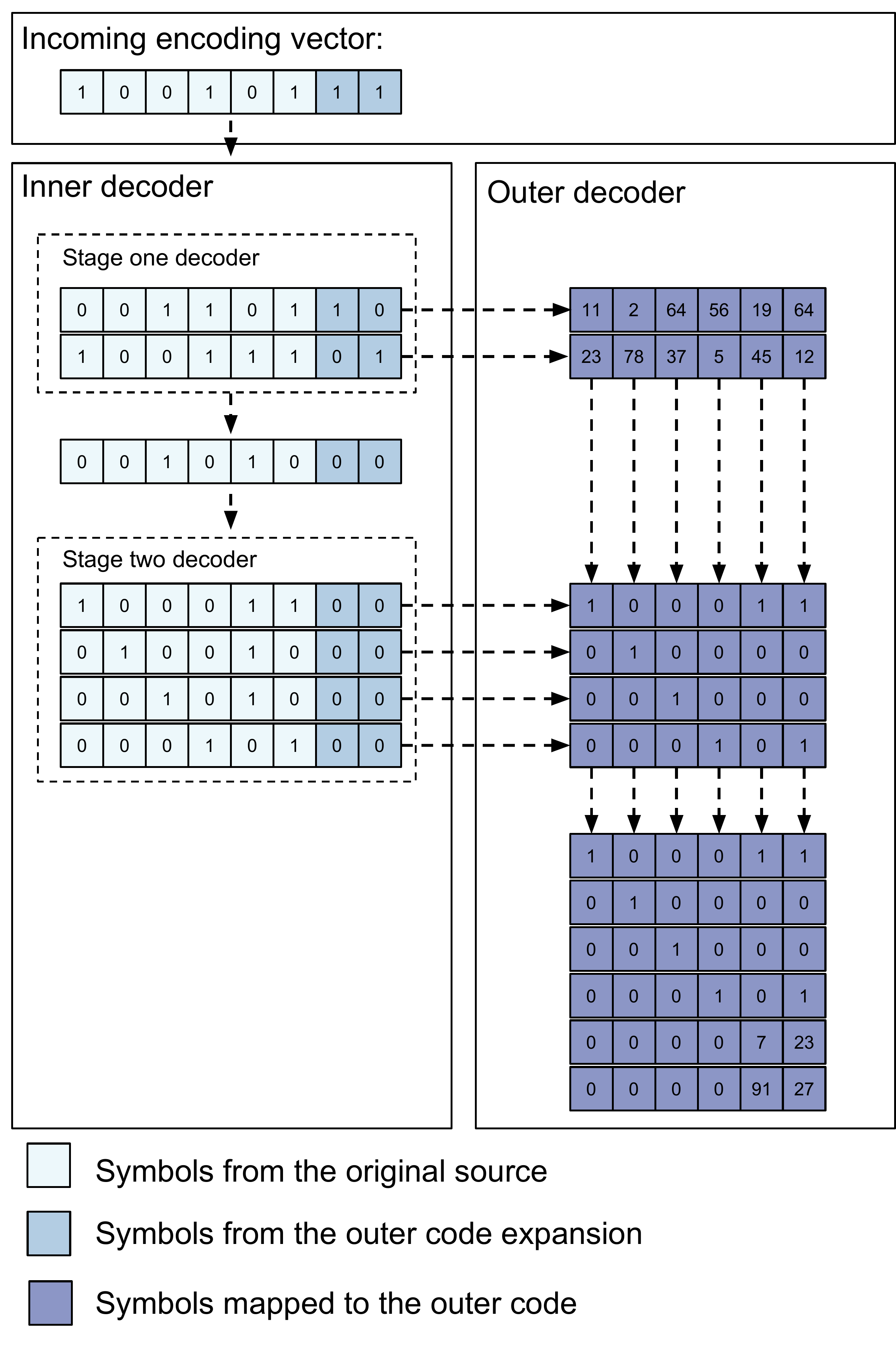}
      }}%
    \caption{Examples of Fulcrum's (a) inner decoder and (b) combined
      decoder. A Fulcrum inner decoder (a) skips the use of the outer
      code by decoding the entire inner block and then discarding all
      symbols belonging to the outer expansion.  A Fulcrum combined
      decoder (b) uses a two stage inner decoder to eliminate as much
      of the contribution of the outer code as possible before mapping
      the symbol to the outer decoder. It should be noted that this
      only works in cases where the outer code is systematic.}
\vspace{-0.3cm}
\end{figure}

% \begin{figure}[t]
% \centering
% \includegraphics[width=0.5\textwidth]{fig/fulcrum_inner_decoder_detail.pdf}
% \caption{In this example we illustrate the mechanics of a Fulcrum
%   inner decoder. When using the inner decoder the outer code is
%   essentially eliminated by decoding the entire inner block and then
%   discarding all symbols belonging to the outer expansion. It should
%   be noted that this only works in cases where the outer code is
%   systematic. }
% \label{fig:inner_decoder_detail}
% \end{figure}

% \begin{figure}[t]
% \centering
% \includegraphics[width=0.5\textwidth]{fig/fulcrum_combined_decoder_detail.pdf}
% \caption{Example of how a combined decoder uses a two stage inner
%   decoder to eliminate as much of the contribution of the outer code
%   as possible before mapping the symbol to the outer decoder.}
% \label{fig:combined_decoder_detail}
% \end{figure}

In order to understand how this works, let us go through the example
shown in Fig.~\ref{fig:combined_decoder_detail}. When an encoding
vector arrives at a combined decoder, it is first passed to the inner
decoder. Internally, the inner decoder is split into two stages. In
stage one, we attempt to eliminate the extension added in the outer
encoder (these are the symbols that when mapped to the outer decoder
will have coding coefficients from the outer field). If stage one
successfully eliminates the expansion, the symbol is passed to stage
two. In the stage two decoder, we only have linear combinations of
original source symbols. These symbols have a trivial encoding vector
when mapped to the outer decoder. Once stage one and stage two
combined have full rank the stored symbols are mapped to the outer
decoder. Notice in Fig.~\ref{fig:combined_decoder_detail} how symbols
coming from stage two have coding coefficients $0$ or $1$ require only
a few operations to be decoded, whereas the symbols coming from stage
one have a dense structure with coding coefficients coming from the
outer field, represented by $c_{xy} \in GF(2^h)$, where $GF(2^h)$ is
the field used for the outer code. After mapping to the outer decoder,
the final step is to solve the linear system shown in the lower right
of Figure~\ref{fig:combined_decoder_detail}.

\vspace{-0.2cm}
\section{Performance Results}
In the following section, we present performance results obtained by
running the different algorithms on various devices as depicted in
Table~\ref{tbl:devices}. For all benchmarks a packet size of 1600~B
was used and the outer Fulcrum code is performed over $GF(2^8)$. We
implemented the Fulcrum encoder and the three decoder types in
Kodo~\cite{kodo11}. The results for the RLNC encoders and decoders in
$GF(2)$ (``Binary'' in the Figures) and $GF(2^8)$ (``Binary8'' in the
Figures) use the current implementation in this library with and
without Single Instruction Multiple Data (SIMD) operations for
hardware speed up. Performance is evaluated using Kodo's
benchmarks. We assume a systematic outer code structure to compare
performance of the three decoder types.

\begin{table}[ht!]
\centering
\resizebox{0.5\columnwidth}{!}{
\begin{tabular}{lll}
    \hline
    \textbf{Alias} & \textbf{Device} & \textbf{CPU} \\ \hline
    N6     & Nexus 6 & Quad-core 2.7 GHz Krait 450 \\ 
    N9     & Nexus 9 & Dual-core 2.3 GHz Denver \\ 
    i5     & Intel NUC D54250WYK & Dual-core 2.6 GHz Intel core i5-4250U \\ 
    i7     & Dell latitude E6530 & Quad-core 2.7 GHz Intel core i7-3740QM \\  
    Rasp v2 & Raspberry PI 2 model B V1.1 & Quad-core 900MHz ARM Cortex-A7 CPU \\
    S5     & Samsung S5 & Quad-core 2.5 GHz Krait 400 \\ \hline
\end{tabular}
}
\caption{Measured devices}
\label{tbl:devices}
\vspace{-0.2cm}
\end{table}

%\subsection{Decoder's Processing Speed}

Figure~\ref{fig:fulcrum_proc_speed_decoder} shows the decoding
throughput for the Fulcrum code, with different $r$ and decoder type.
This is compared against the performance of an \ac{RLNC} decoder using
$GF(2)$, as this represents the fastest dense code, and an \ac{RLNC}
decoder using $GF(2^8)$, as this represents a commonly used dense code
with the same field size used in the outer code of the Fulcrum schemes
and where decoding probability approaches 1 when $n$ packets have been
received.

When only the inner code over $GF(2)$ is utilized for decoding in
Fulcrum (inner decoder), Fulcrum is similar to \ac{RLNC} over
$GF(2)$. When only the outer code over $GF(2^8)$ is utilized in
Fulcrum (outer decoder), Fulcrum becomes similar to \ac{RLNC} over
$GF(2^{8})$.  Thus, in these two cases the decoding throughput for
Fulcrum is expected to be equivalent to \ac{RLNC} over $GF(2)$ and
\ac{RLNC} over $GF(2^{8})$, respectively. This is confirmed and
verifies that the decoding implementation performs as expected in
these two known cases.
The case of the combined decoder is more interesting as it shows the
gain over \ac{RLNC} with $GF(2^{8})$. Not only is the Fulcrum combined
decoder always faster compared to \ac{RLNC} over $GF(2^{8})$, but the
performance also approaches that of \ac{RLNC} over $GF(2)$ as the
generation size grows. For $n=1024$ packets, the combined decoder is
$20$ times faster than the \ac{RLNC} over $GF(2^{8})$, but with
similar decoding probability.

\begin{figure}
  \centering
%\subfloat[ \label{fig:fulcrum_proc_speed_decoder}]{{\includegraphics[width=0.45\linewidth]{}
\subfloat[ \label{fig:fulcrum_proc_speed_decoder}]{{\includegraphics[width=0.45\linewidth]{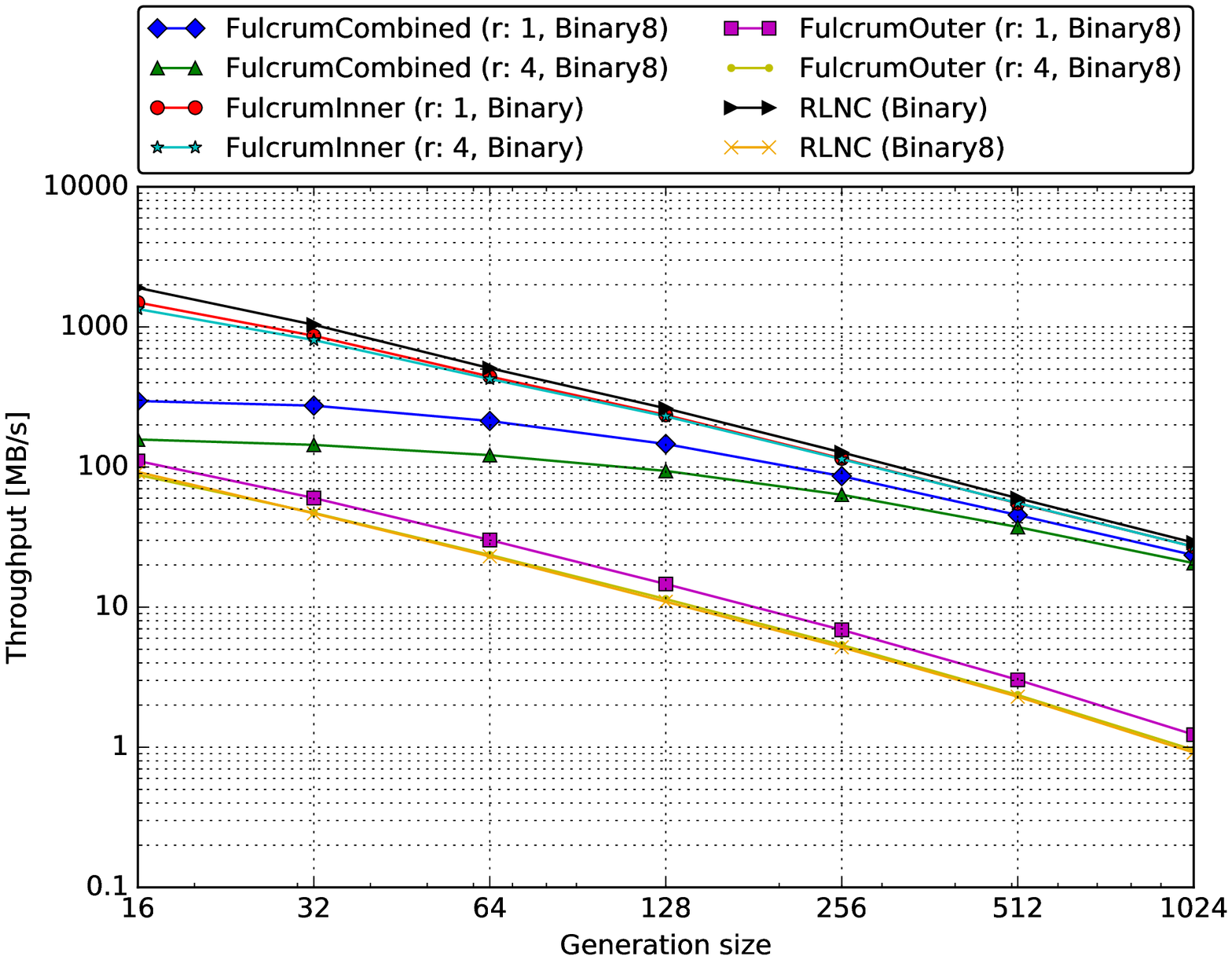}
 }}%
\qquad
%    \subfloat[\label{fig:fulcrum_proc_speed_encoder} ]{{\includegraphics[width=0.45\linewidth]{}
    \subfloat[\label{fig:fulcrum_proc_speed_encoder} ]{{\includegraphics[width=0.45\linewidth]{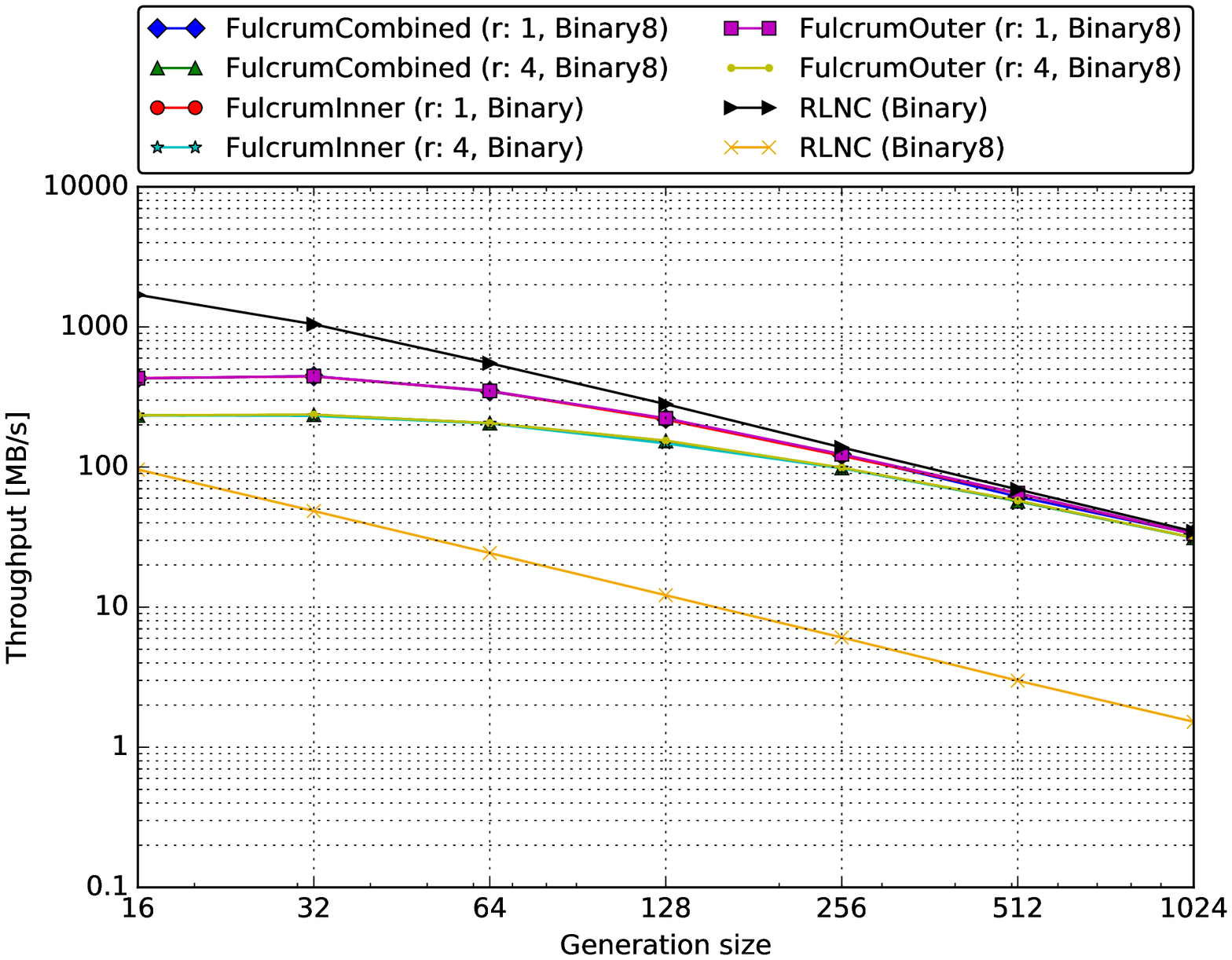}
      }}%
    \caption{Processing speed of an i7 without SIMD optimizations for
      (a) decoding of various Fulcrum decoders compared to RLNC
      decoders, and (b) encoding of Fulcrum compared to RLNC encoding.
      The encoding speed of Fulcrum does not depend on the decoder
      type (combined, inner, or outer). }
\vspace{-0.3cm}
\end{figure}

%Figure~\ref{fig:fulcrum_proc_speed_decoder} also shows an interesting
%aspect of the outer code for $r = 1$. For this case, the throughput of
%Fulcrum's outer decoder is higher than for standard \ac{RLNC} over
%$GF(2^8)$. The reason is that each inner coded packet has a
%probability of $1/2$ to have a contribution of an expansion packet,
%i.e., a packet with high field coefficients different from zero or
%one. When the system map back in the outer decoder, roughly half of
%the rows will have only zeros and ones while the other half will have
%other elements in $GF(2^8)$. The advantage of the former rows is that
%it reduces the number of multiplication operations needed in the
%decoding process allowing for a processing speed up. This speed up
%could be larger than depicted in
%Figure~\ref{fig:fulcrum_proc_speed_decoder}, if we tried to exploit
%this particular structure. We have avoided this because the
%probability of getting rows with only ones and zeros after mapping
%back decreases dramatically as $r$ increases. This is seen in the
%performance of the other values of $r$.

Figure~\ref{fig:fulcrum_proc_speed_decoder} shows that the combined
decoder has some performance dependence for small generation size for
different $r$ values, namely, the higher the $r$ the lower the
processing speed, but faster than the outer decoder. However, this
difference in performance becomes negligible as the number of data
packets per generation increases. The reason is that most of the
processing effort will be spent decoding in the inner code, and the
effect of the $r$ expansion packets is less marked.  Decoding speed is
usually given a higher priority than the encoding speed, e.g., if
there are more decoders than encoders, or because the decoding process
tends to be slower than the encoding one. However, encoding speed can
be critical in some cases, e.g., a satellite transmitting to an earth
station, sensor nodes collecting and sending data to a base station,
because there is an inherent constraint on the sender's computational
capabilities or energy.  Figure~\ref{fig:fulcrum_proc_speed_encoder}
shows the encoding speed compared to the baseline \ac{RLNC} over
$GF(2)$ and $GF(2^8)$.  For the case of $n = 16$ packets in the
generation, the Fulcrum encoder runs $3.2$x to $6.6$x faster for $r =
1$ and $r=4$, respectively, compared to the $GF(2^8)$ \ac{RLNC}
encoder.  As $n$ increases, so does the gain over the \ac{RLNC}
$GF(2^8)$ and the dependency on the choice of $r$ decreases.  For
example, at $n = 128$ packets the Fulcrum encoder is approximately
$14$x faster than the \ac{RLNC} $GF(2^8)$ encoder, and for $g=256$ the
encoding speed is close to \ac{RLNC} over $GF(2)$.

\begin{figure}
  \centering
%\subfloat[ \label{fig:fulcrum_proc_speed_decoder}]{{\includegraphics[width=0.45\linewidth]{fig/fulcrum_processing_speed.pdf}
\subfloat[ \label{fig:fulcrum_SIMD_i7_decoder}]{{\includegraphics[width=0.45\linewidth]{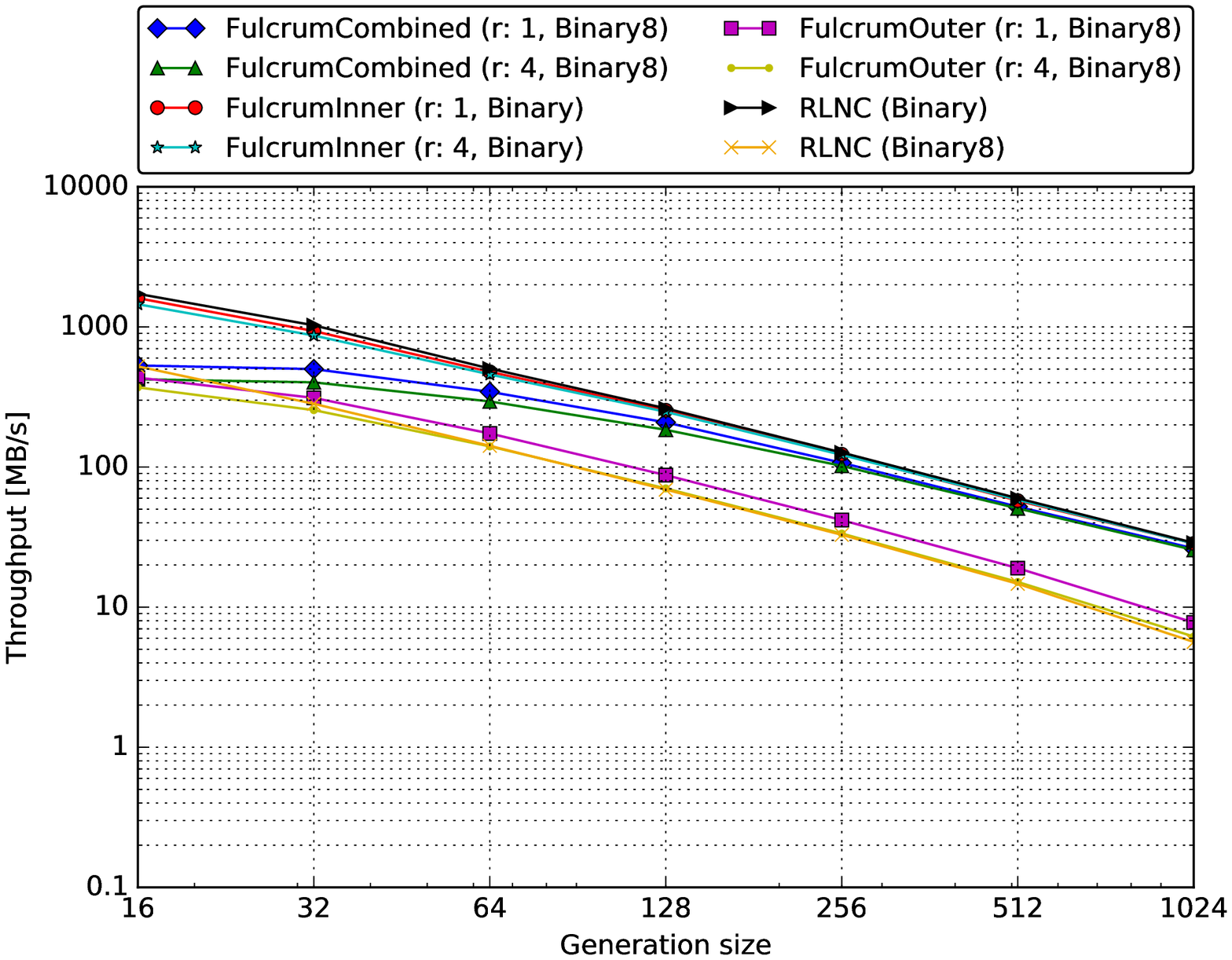}
 }}%
\qquad
%    \subfloat[\label{fig:fulcrum_proc_speed_encoder} ]{{\includegraphics[width=0.45\linewidth]{fig/fulcrum_combined_encoder_different_r.pdf}
    \subfloat[\label{fig:fulcrum_SIMD_N6_decoder} ]{{\includegraphics[width=0.45\linewidth]{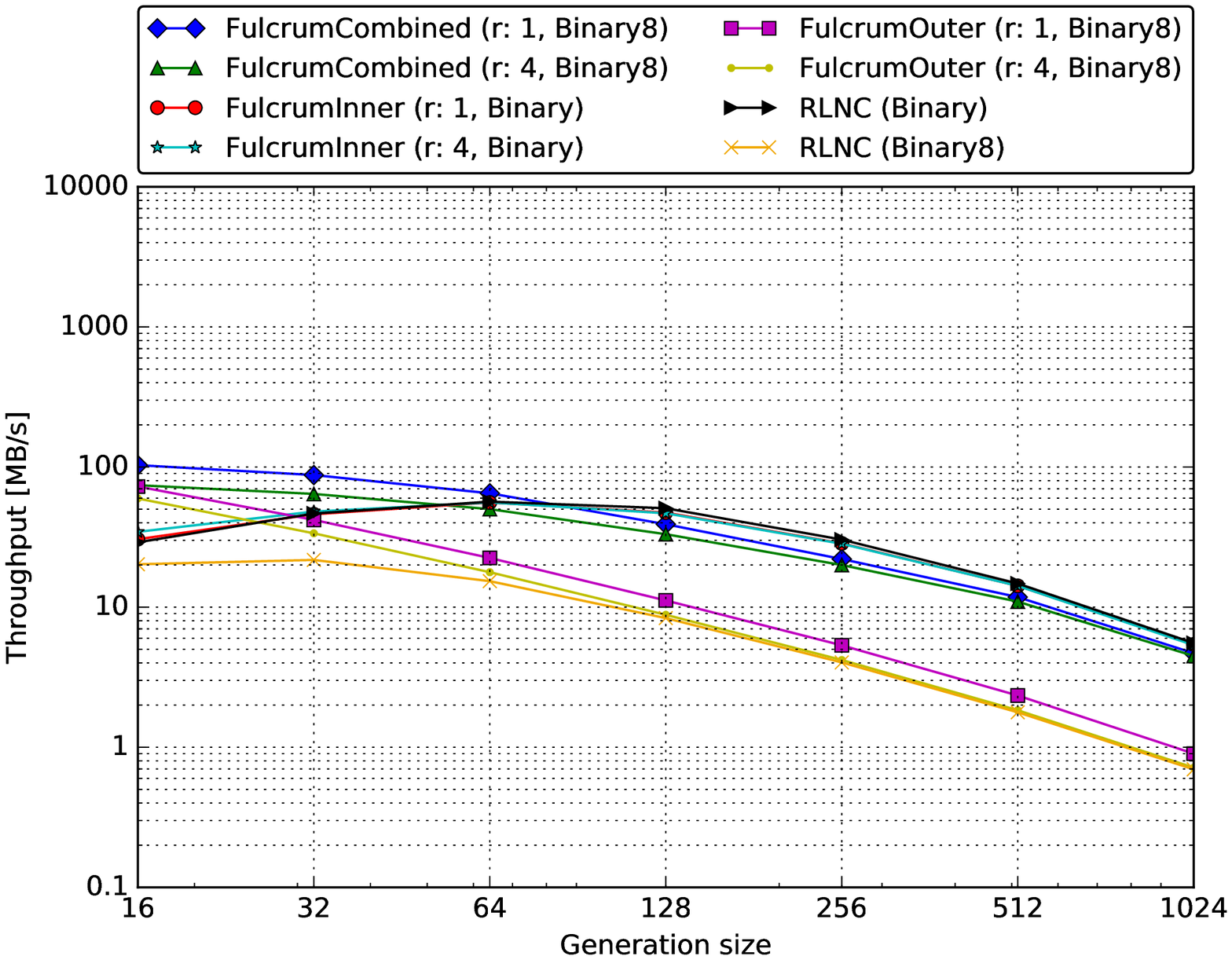}
      }}%
    \caption{Processing speed of decoding with SIMD optimizations for
      (a) the i7 and (b) the N6 }
\vspace{-0.3cm}
\end{figure}

Figure~\ref{fig:fulcrum_SIMD_i7_decoder} studies the effect of SIMD
instructions on the performance of the decoders on the i7. This
optimization has an effect on operations with $GF(2^8)$, which means
that the RLNC $GF(2)$ decoder and the inner decoder will show no
changes from
Figure~\ref{fig:fulcrum_proc_speed_decoder}. Figure~\ref{fig:fulcrum_SIMD_i7_decoder}
shows similar trends, but with reduced gains with respect to RLNC over
$GF(2^8)$ as the performance of $GF(2^8)$ operations are greatly
improved. Nonetheless, at $n = 128$ the decoding speed is $2.8$ times
higher than RLNC $GF(2^8)$ and close to the performance of RLNC over
$GF(2)$.

Figure~\ref{fig:fulcrum_SIMD_N6_decoder} shows the performance of the
decoders with SIMD optimizations on a N6 mobile device. The trends in
the N6 are more surprising, as standard RLNC decoders for $GF(2)$ are
slower than the Combined Decoders for $n \leq 64$ (up to 3.3 times
slower at $n = 16$).  This gain of the combined decoder is similar
when compared to Inner decoders, which use the same algorithm as RLNC
$GF(2)$ for decoding purposes. This means that the combined decoder is
not only faster than RLNC in $GF(2^8)$ in the N6, but also faster than
RLNC in $GF(2)$ for the small and medium range of
$n$. Figure~\ref{fig:fulcrum_SIMD_N6_decoder} also shows that the
Fulcrum outer decoder can also be faster than RLNC $GF(2^8)$ for $n
\leq 64$ packets. Similar trends have been observed for the S5 and the
N9, showing that Fulcrum does not require all devices to trade-off
speed for performance (or vice-versa) but can provide better
performance in both domains.

%Finally, we compare the decoding probability performance of Fulcrum
%with a $GF(2^8)$ systematic RLNC outer code to the
%theoretical values from the analysis in
%Section~\ref{sec:analys-rx-perf}.
%\begin{figure}[t]
%\centering
%\includegraphics[width=0.45\textwidth]{}
%\caption{CDF of decoding probability after reception of a number of
%  coded packets comparing simulation results to theoretical
%  results. Parameters: $n = 64$ original packets and $1000$ simulation
%  runs.}
%\label{fig:CDF_Theory_Practice}
%\end{figure}
%Figure~\ref{fig:CDF_Theory_Practice} shows the CDF of the decoding
%probability produced with $1000$ runs in our real implementation
%matches theory quite well.

\begin{figure}
  \centering
%\subfloat[ \label{fig:fulcrum_proc_speed_decoder}]{{\includegraphics[width=0.45\linewidth]{fig/fulcrum_processing_speed.pdf}
\subfloat[ \label{fig:fulcrum_NoSIMD_combined_decoder_alldevices}]{{\includegraphics[width=0.45\linewidth]{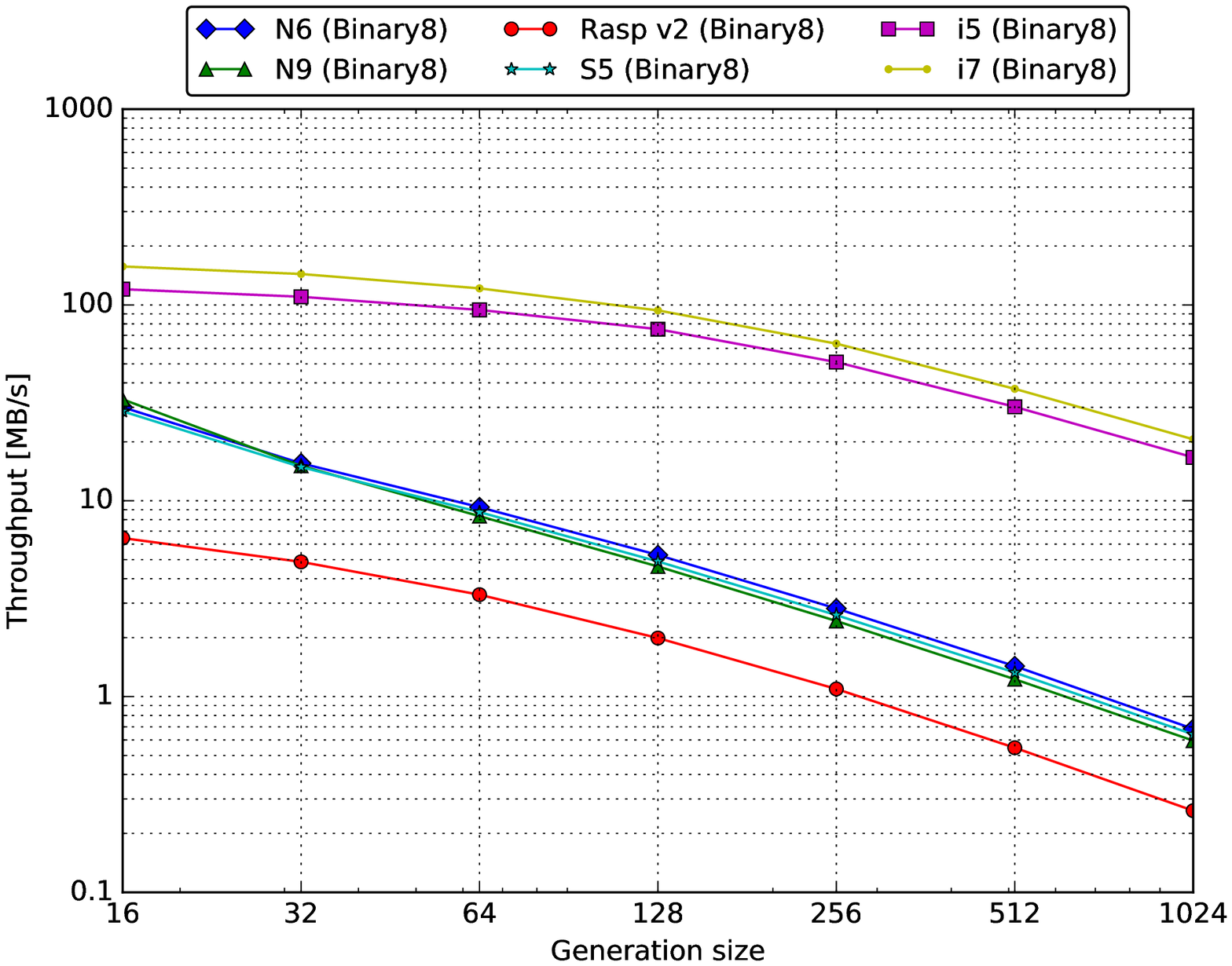}
 }}%
\qquad
%    \subfloat[\label{fig:fulcrum_proc_speed_encoder} ]{{\includegraphics[width=0.45\linewidth]{fig/fulcrum_combined_encoder_different_r.pdf}
    \subfloat[\label{fig:fulcrum_SIMD_combined_decoder_alldevices} ]{{\includegraphics[width=0.45\linewidth]{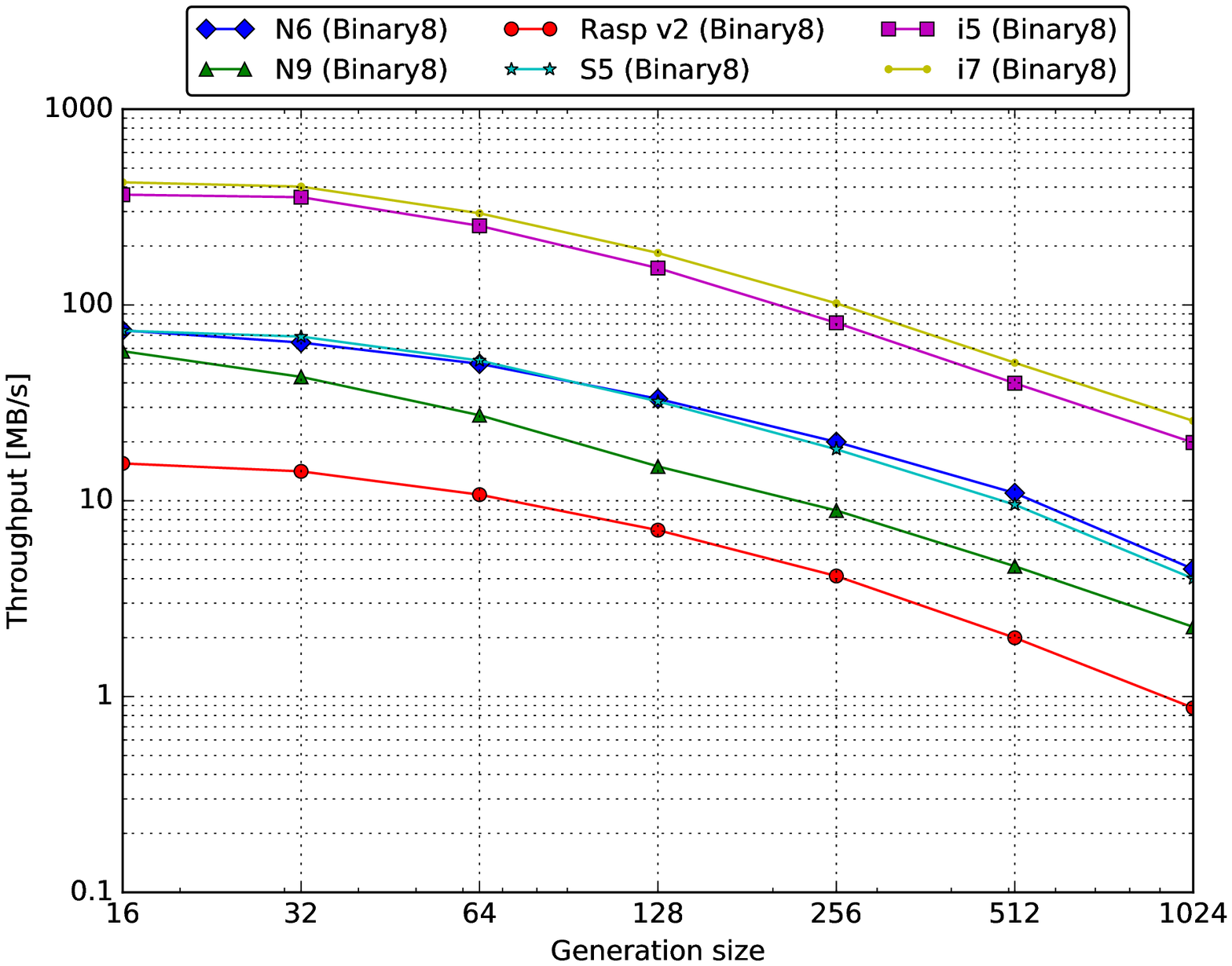}
      }}%
    \caption{Processing speed of Combined Decoder (a) without SIMD and
      (b) with SIMD for different devices }
\vspace{-0.3cm}
\end{figure}

Finally, Figures~\ref{fig:fulcrum_NoSIMD_combined_decoder_alldevices}
and~\ref{fig:fulcrum_SIMD_combined_decoder_alldevices} show the
performance of the combined decoder on various commercial devices
without and with SIMD optimizations, respectively. The use of SIMD
results two- to four-fold speed ups for all devices for low
$n$. However, we observe that the benefit of SIMD is more marked for
mobile devices at high $n$ (up to five times the speed up), while for
desktops the gains over no SIMD have essentially dissapeared. This
does not mean that SIMD in desktops is not improving computation of
$GF(2^8)$, but rather that the main limitation becomes the processing
of $GF(2)$ operations at high speeds. More important is the fact that
mobile devices, which are more energy and computationally limited, can
benefit significantly and over a large range of $n$ from the
combination of Fulcrum and the use of apropriate SIMD instructions.

\vspace{-0.2cm}
\section{Conclusions}
This paper presents Fulcrum network codes, an advanced network code
structure that preserves \ac{RLNC}'s ability to recode seamlessly in
the network while providing key mechanisms for practical deployment of
network coding. Fulcrum addresses several of the standing practical
problems with existing \ac{RLNC} codes and rateless codes, by
employing a concatenated code design. This concatenated code design
provides our solution with a highly flexible, tunable and intuitive
design. This paper describes in detail the design of Fulcrum network
codes and its practical benefits over previous network coding designs
and it provides mathematical analysis on the performance of Fulcrum
network codes under a wide range of conditions and scenarios. The
paper also presents a first implementation of Fulcrum in the Kodo C++
network coding library as well as benchmarking its performance to
high-performance \ac{RLNC} encoder and decoders.

Our throughput benchmarks show that Fulcrum provides much higher
encoding/decoding processing speed compared to \ac{RLNC} $GF(2^8)$. In
fact, the processing speeds approach those of \ac{RLNC} $GF(2)$ as the
generation size grows. More importantly, Fulcrum can maintain the
decoding probability performance of \ac{RLNC} $GF(2^8)$ at the same
time that the processing speed is increased by up to a factor of $20$ in
some scenarios with our initial implementation. Furthermore, the
trade-off between coding processing speed and decoding probability can
easily be adjusted using the outer code expansion $r$ to meet the
requirements of a given application.

Fulcrum solves several standing problems for existing \ac{RLNC}
codes. First, it enables an easily adjustable trade-off between coding
throughput and decoding probability. Second, it provides a higher
coding processing speed when compared to the existing \ac{RLNC} codes
in use. Third, it reduces the overhead associated with the coding
vector representation, necessary for recoding, while maintaining a
high decoding probability. Fourth, it reduces the type of operations
and logic that the network needs to support while allowing end-to-end
devices to tailor their desired service and performance, making a key
step to widely deploying network coding in practice. This has an added
advantage of allowing the network to support future designs seamlessly
and naturally providing backwards compatibility.

Given these advantages, Fulcrum is particularly well suited for a wide
range of scenarios, including (i) distributed storage, where the
reliability requirements are high and many storage units are in use;
(ii) wireless (mesh) networks, where the packet size is typically
small, which results in large generation sizes when large file are
transmitted; (iii) heterogeneous networks, since Fulcrum supports
different decoding options for (computationally) strong and weak
decoders; (iv) wireless sensor networks, where the packet sizes are
small requiring small overhead and also the devices are energy- and
computationally-limited.  Future work will study optimal solutions to
use Fulcrum's structure to spread the complexity over the network.
%shall benchmark Fulcrum to a
%wider range of platforms, demonstrate per formance of Fulcrum with a
%different array of inner and outer codes proposed in this paper, and
%demonstrate its performance in protocols.

%% \section*{Acknowledgment}
%% This work was partially financed by the Green Mobile Cloud project
%% (Grant No. DFF - 0602-01372B), the Colorcast project (Grant No. DFF -
%% 0602-02661B), the TuneSCode project (Grant No. DFF - 1335-00125), and
%% project Grant No. DFF-4002-00367 granted by the Danish Council for
%% Independent Research. The work was also partially supported by the
%% Spanish MINECO project (Grant No. MTM2012-36917-C03-03).
\vspace{-0.3cm}
\bibliographystyle{IEEEtran}

\bibliography{bibtex}

% that's all folks
\end{document}